  \providecommand\BibTeX{{%
    \normalfont B\kern-0.5em{\scshape i\kern-0.25em b}\kern-0.8em\TeX}}}
\newtheorem{definition}{Definition}
\newtheorem{problem}{Problem}
\newtheorem{lemma}{Lemma}
\newtheorem{remark}{Remark}
\newtheorem{theorem}{Theorem}
\DeclareMathOperator{\dist}{{\text{dist}}}
\begin{document}

\title[A Convex Parameterization of Assume-Guarantee Contracts for Linear Systems]{Compositional Synthesis via a Convex Parameterization of Assume-Guarantee Contracts}
\author{Kasra Ghasemi}
\affiliation{%
  \institution{Boston University}
  \city{Boston, MA}
  \country{USA \\ kasra0gh@bu.edu}
}

\author{Sadra Sadraddini}
\affiliation{%
  \institution{Massachusetts Institute of Technology}
  \city{Cambridge, MA}
  \country{USA \\ sadra@mit.edu}
}

\author{Calin Belta}
\affiliation{%
  \institution{Boston University}
  \city{Boston, MA}
  \country{USA \\ cbelta@bu.edu}
}

\renewcommand{\shortauthors}{Ghasemi. Sadraddini, Belta}

\begin{abstract}
We develop an assume-guarantee framework for control of large scale linear (time-varying) systems from finite-time reach and avoid or infinite-time invariance specifications. The contracts describe the admissible set of states and controls for individual subsystems. A set of contracts compose correctly if mutual assumptions and guarantees match in a way that we formalize.  We propose a rich parameterization of contracts such that the set of parameters that compose correctly is convex. Moreover, we design a potential function of parameters that describes the distance of contracts from a correct composition. Thus, the verification and synthesis for the aggregate system are broken to solving small convex programs for individual subsystems, where correctness is ultimately achieved in a compositional way. Illustrative examples demonstrate the scalability of our method.     
\end{abstract}



\keywords{Compositional Synthesis, Assume-Guarantee Contracts, Zonotopes, Viable Sets, Linear Systems}


\maketitle





\section{Introduction}

Formal verification and synthesis are computationally expensive and traditional methods fail in large-scale systems. Thus, approaches that exploit inherent modular structures have been proposed to deal with the scalability issue.  
Such structures are present in a wide variety of cyber-physical systems such as energy management \cite{bowen1993formal,clarke2008analysis}, transportation \cite{clarke2008analysis,coogan2015compartmental}, and biological engineering \cite{batt2007robustness,nielsen2016genetic}.  

Assume-guarantee reasoning (AGR) \cite{henzinger2001assume,giannakopoulou2004assume} is a divide and conquer approach to verification and control. AGR involves contracts, which, in plain words, describe the promises that individual modules of an aggregate system take from and make to the environment, which are passed to other subsystems in a circular or hierarchical fashion. If these promises are carefully designed, the verification/control of the aggregate system is achieved in a scalable way. AGR originates from the formal methods community, where the main application domain has traditionally been discrete space systems in software model checking \cite{clarke1999model,baier2008principles}. However, AGR for continuous space and hybrid models in engineering applications has become an active research area in recent years \cite{Nilsson2016seperable,kim2017symbolic,sadraddini2017formal_traffic,kim2017dynamic,oh2019optimizing,nuzzo2019stochastic,saoud2019assume}. 

While there are important works on the theoretical foundations of AGR and how contracts should be used \cite{kwiatkowska2010assume,kerber2010compositional,nuzzo2019stochastic}, there is little work on how to find the contracts themselves. Designing contracts, especially for circular reasoning, is a much harder problem than using a set of given contracts. A relevant problem is \emph{assumption mining} \cite{li2011mining}, which aims to represent a subset of environment assumptions that lead to desirable system behaviors, typically described by temporal logics \cite{baier2008principles}. Assumption mining is non-trivial for control systems operating in continuous space. The contracts may represent the admissible set of disturbances/couplings for individual subsystems, but there is no satisfying answer on how to search for such sets. A natural approach is parameterization and searching for those that facilitate compositional verification and control \cite{kim2017small}. 

In this paper, we develop a parametric assume-guarantee approach for a network of discrete-time linear systems with (weakly) coupled dynamics and affected by disturbances. We consider both finite-time reach and avoid specifications and infinite-time set-invariance. The goal is verification and control in a fully compositional way such that instead of directly solving the intractable large problem, we solve small problems corresponding to each subsystem multiple times. Our AGR can be circular - a harder task than cascade or hierarchical reasoning - as every subsystem may interact with every other subsystem. The synthesis problem involves finding the contract parameters and the controllers at the same time. The resulting controllers are correct-by-design local state feedback. 

\begin{figure*}[t] 
  \centering
  \includegraphics[width=0.48\textwidth]{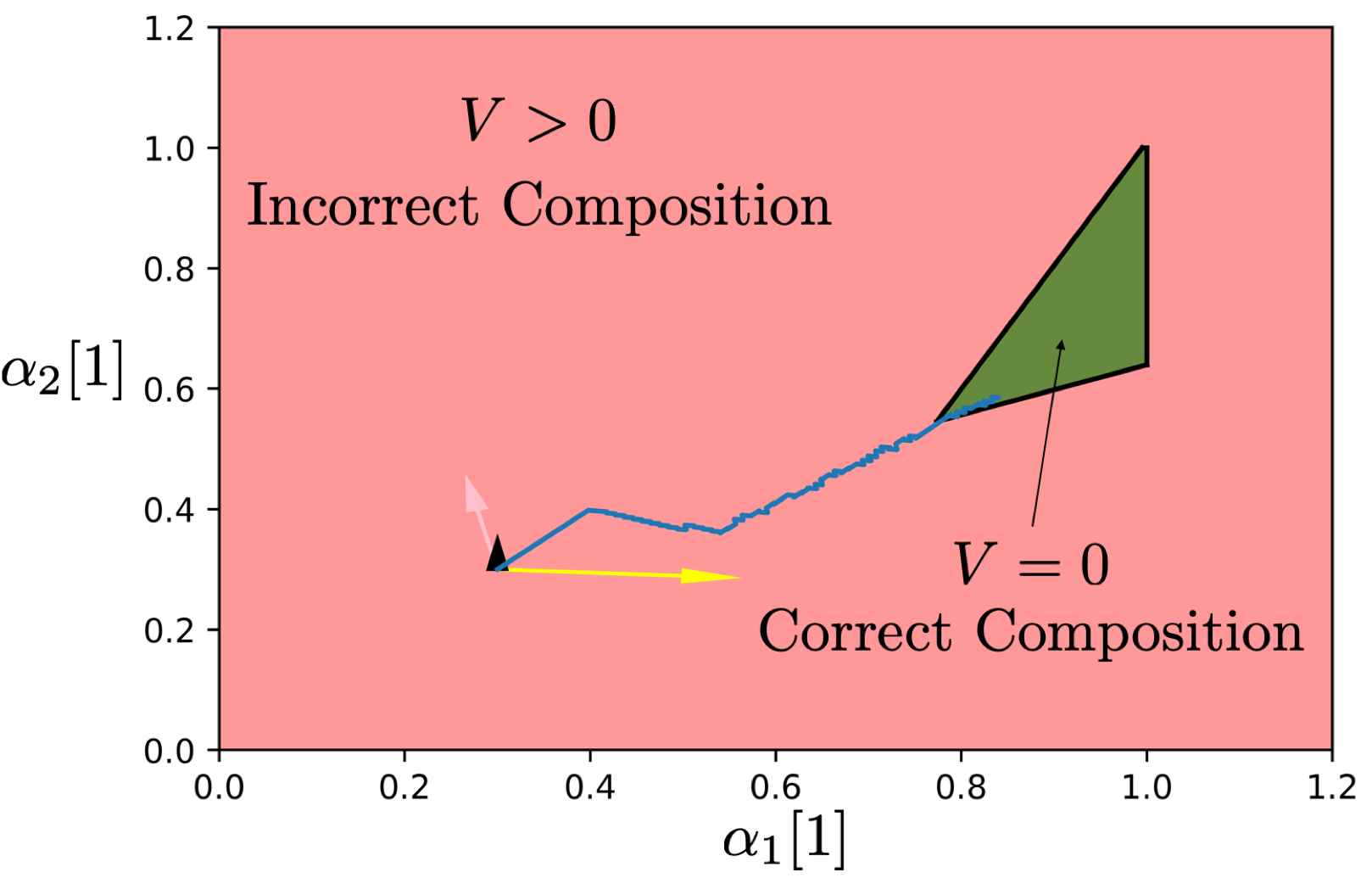}
  \includegraphics[width=0.49\textwidth]{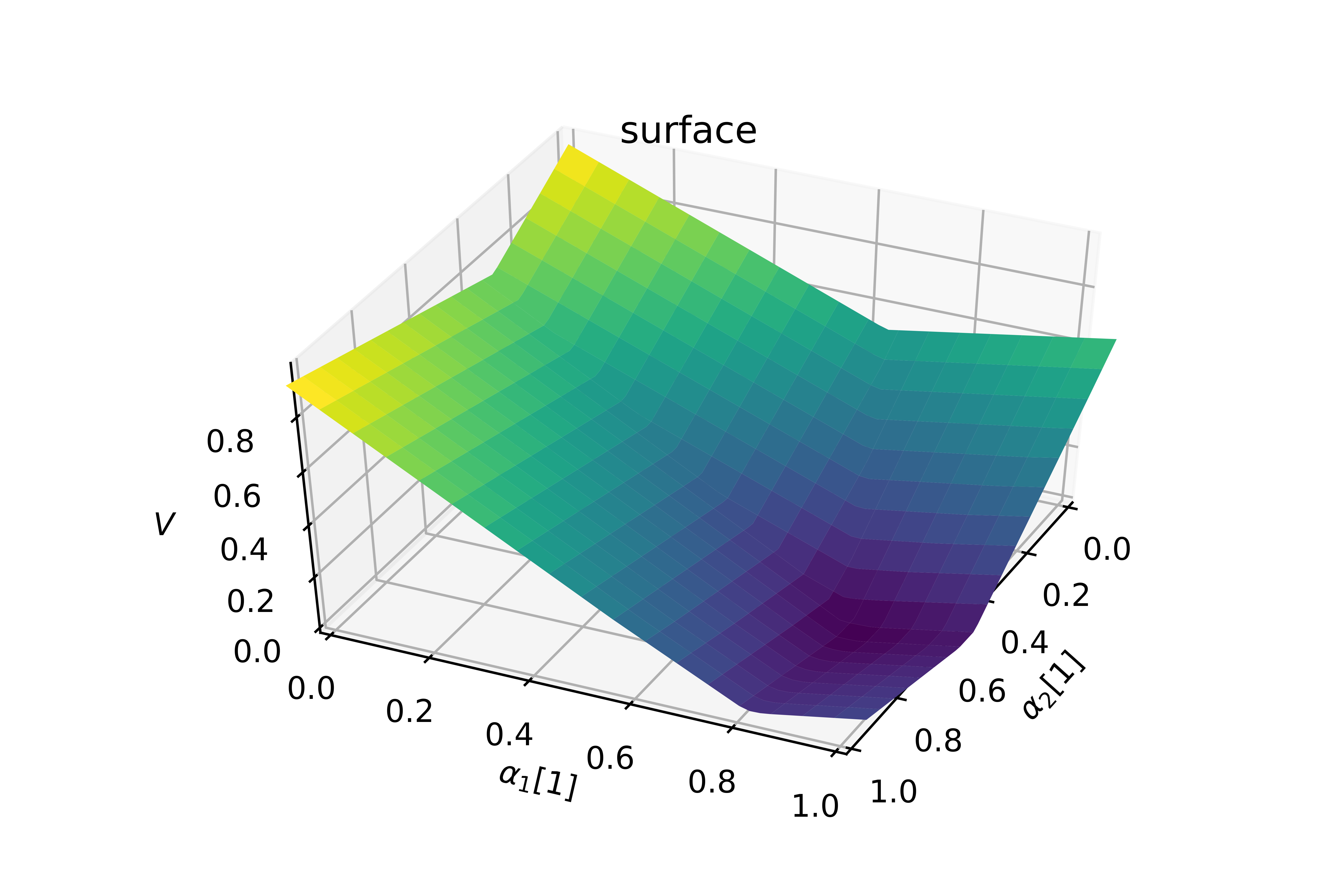}
  \caption{[Left] The green polytope is the projection of the correct composition parameters for three 2D subsystems, a total of 6 parameters. Given an initial guess of the parameters, we use the dual solutions of specific convex programs (explained in Section \ref{sec_gradient}) for each subsystem to obtain gradient information of the potential function, which is the summation of gradients corresponding to each subsystem. The gradient descent parameters toward a correct composition. [Right] The potential function with respect to two parameters, when other parameters are fixed. The values $V=0$ correspond to correct composition.}
  \label{figurelabe2}
\end{figure*}

\subsection{Contributions and Organization}
We provide the necessary background in Section \ref{sec_prelim}, and formalize the problem in Section \ref{sec_problem}. Contributions of this paper are as follows:
\begin{enumerate}[leftmargin=*]
\item We introduce a framework to characterize the assume-guarantee pair for an individual subsystem. We show how to cast the computations of finite-time and infinite-time viable sets represented by zonotopes \cite{ziegler2012lectures} using convex linear programs (Section \ref{sec_AG}). 
\item We introduce a specific form of parametric contracts and define the notion of correctness for composing a set of contracts for the whole system. We show that the set of parameters leading to correct composition is convex. Furthermore, we introduce a potential function of parameters that characterize the distance from correctness. This potential function is convex (e.g. Fig \ref{figurelabe2}[Bottom]) and is a summation of directed Hausdorff distances \cite{rockafellar2015convex,sadraddini2019linear}, each defined for an individual subsystem (Section \ref{sec_AGR}).
\item We compute the correct contract parameters. While a single convex program can find the (optimal) correct parameters, we show that the same is achieved by solving smaller convex programs for each subsystem and summing the information from each dual solution to find the gradient of the potential function (e.g. Fig \ref{figurelabe2}[Top]) (Section \ref{sec_gradient}). 
\item We present illustrative examples and numerical benchmarks on the usefulness of our method and demonstrate its scalability to very large problems (Section \ref{sec_examples}). 
\end{enumerate}

\subsection{Related Work}
\subsubsection*{Monotone systems} Motivated by vehicular transportation models, a particular class of problems that compositional synthesis has had some success is monotone systems, where the dynamics and specifications preserve specific forms of partial order relations. It was shown in \cite{kim2016directed,kim2017symbolic} that assumption mining for this class of problem can be cast as a multi-dimensional binary search. In \cite{kim2015compositional}, the assumptions were fixed sets of coupling states, and they were extended to periodic sets in \cite{sadraddini2017formal_traffic} and dynamic ones in \cite{kim2017dynamic}. Monotonicity does not hold, in general, for constrained linear systems such as those considered in this paper. 

\subsubsection*{Linear systems} The closest works are \cite{Nilsson2016seperable} and \cite{ghasemi2019compositional}. In \cite{Nilsson2016seperable}, the constrained set-invariance problem for discrete-time disturbed linear systems was considered. However, the synthesis method was \emph{not} compositional as the parameters of the controllers and the sets corresponding to assume-guarantee contracts, also characterized by zonotopes, were computed by solving a single large linear matrix inequality (LMI) problem. In \cite{ghasemi2019compositional}, the same problem as in \cite{Nilsson2016seperable} was considered, the method was compositional, and was shown to scale well to 1000+ dimensions. However, there was no contract parameterization. Instead, the method iteratively searched for a limited set of contracts and the problem would not have been solved if the first guess of the contracts was not appropriate. 

\subsubsection*{Finite-state systems} The authors in \cite{Eqtami2019} presented a compositional method to compute assume-guarantee contracts. They introduced a parametric approach with a correctness criterion to deal with the circularity issue of AGR. The method is originally designed for finite transition systems. Extensions to infinite transition are done through finite-state abstractions. However, abstraction methods are themselves, in general, computationally expensive and can cause considerable conservatism. Additionally, the parameterization is not as rich as the one in this paper as one scalar parameter is attributed to each subsystem. Furthermore, there is no convergence guarantee for the parameters. Finally, the contract search is performed in an exhaustive fashion over the finite states, which hinders scalability.  



\section{Notation and Preliminaries}
\label{sec_prelim}
The set of real numbers and non-negative integers are denoted by $\mathbb{R}$ and $\mathbb{N}$, respectively. We denote the non-negative integers not larger than $h$ by $\mathbb{N}_h$. Given two matrices $A_1$ and $A_2$ with the same number of rows, $[A_1,A_2]$ denotes their horizontal stacking. Given a vector $\alpha \in \mathbb{R}^n$, $\text{Diag}(\alpha)$ is a $n \times n$ diagonal matrix where $\alpha$ is its main diagonal and sum($\alpha$) is the sum of the elements of $\alpha$. We denote the $n\times n$ identity matrix by $I_n$, where $n \in \mathbb{N}$. The logical operation ``and'' is denoted by $\wedge$. Given two sets $S_1, S_2 \subset \mathbb{R}^n$, their Minkowski sum is denoted by $S_1 \oplus S_2 = \{s_1+s_2| s_1 \in S_1, s_2 \in S_2\}$. Given $s \in \mathbb{R}^n, S \subset \mathbb{R}^n$ and $T \in \mathbb{R}^{q \times n}$, we interpret $s+S$ as $\{s\} \oplus S$ and $TS=\{Ts|s \in S\}$.

A zonotope is a symmetric set represented as $\mathcal{Z}(x_c,G) := x_c \oplus G \mathbb{B}_p \subset \mathbb{R}^n$, where $x_c \in \mathbb{R}^n$ is the zonotope centroid, the columns of $G \in \mathbb{R}^{n \times p}$ are the zonotope generators, and $\mathbb{B}_p:=\{x\in \mathbb{R}^p| ||x||_\infty \leq 1\}$. Also, the order of a zonotope is defined as $\dfrac{p}{n}$. 
Zonotopes are convenient to manipulate with affine transformations and Minkowski sums \cite{kopetzki2017methods}, as shown below:
\begin{subequations}
\begin{equation}
\label{eq_zonotope_affine}
    A \mathcal{Z}(\bar{x},G)+b=\mathcal{Z}(A\bar{x} + b,AG)
\end{equation}
\begin{equation}
\label{eq_zonotope_minkowski}
    \mathcal{Z}(\bar{x}_1,G_1) \oplus \mathcal{Z}(\bar{x}_2,G_2) = \mathcal{Z}(\bar{x}_1+\bar{x}_2,[G_1,G_2]).
\end{equation}
\end{subequations}

The \textit{Directed Hausdorff distance} of two sets, denoted by $d_{DH}(\mathbb{S}_1,\mathbb{S}_2)$ is a quantitative measure on how distant is the set $\mathbb{S}_2 \subset \mathbb{R}^n$ from being a subset of $\mathbb{S}_1 \subset \mathbb{R}^n$:
\begin{equation}
    d_{DH}(\mathbb{S}_1,\mathbb{S}_2) =  \sup_{s_2\in \mathbb{S}_2} \inf_{s_1\in \mathbb{S}_1} d(s_1,s_2) ,
\end{equation}
where $d(s_1,s_2)$ is a metric between points $s_1$ and $s_2$. We use infinity-norm in this paper. For closed compact sets, we have $d_{DH}(\mathbb{S}_1,\mathbb{S}_2)=0$ if and only if $\mathbb{S}_2 \subseteq \mathbb{S}_1$. Given a set $Y \subseteq X$, and a function $\mu: X \rightarrow U$, we interpret $\mu(Y)\subseteq U$ as $\{\mu(y)|y \in Y\}$.

\section{Problem Statement}
\label{sec_problem}

Consider the following discrete-time time-varying linear system:
\begin{equation}\label{singlesystem_LTV}
    x(t+1) = A(t)x(t)+B(t)u(t)+d(t),
\end{equation}
where $x(t)\in X(t)$, $u(t)\in U(t)$, and $d(t)\in D(t)$. The sets $X(t), D(t) \subset \mathbb{R}^n$, and $U(t) \subset \mathbb{R}^m$ define time varying constraints over the state, disturbance, and control input, respectively. The matrices $A(t) \in \mathbb{R}^{n\times n}$ and $B(t) \in \mathbb{R}^{n \times m}$ may be time dependent, and $t \in \mathbb{N}$ is time.

When the matrices $A(t)$ and $B(t)$ and the bounds $X(t) , U(t)$, and $D(t)$ are time-invariant and we are interested in the infinite-time response of the system, we consider the system to be linear time-invariant (LTI). Throughout the paper, the notation $(t)$ refers to linear time-variant (LTV) class of problems.

A control policy $\mu$ is a set of functions $\mu(t): X(t) \rightarrow U(t), t \in \mathbb{N}_h$. For infinite horizon, the policy $\mu: X \rightarrow U$ is memoryless.

\begin{definition}[Finite-time Viable Sets] \label{def:viable}
A sequence of sets $\{\Omega(t) | \Omega(t) \subseteq X(t), t \in \mathbb{N}_h\}$ for  system \eqref{singlesystem_LTV} is \emph{viable} if there exists a policy $\mu$ such that $\Theta(t) \subseteq U(t), \Theta(t)=\mu(t)(\Omega(t))$, and 
\begin{multline} \label{eq:4}
    \forall t \in \mathbb{N}_{h-1}, \forall x(t)\in \Omega(t), \forall d(t)\in D(t)\\ \Rightarrow x(t+1)\in \Omega(t+1).
\end{multline}
\end{definition}
\begin{definition}[Infinite-time Viable set] \label{def:RCI}
A set $\Omega \subseteq X$ for an LTI system is infinite-time viable, also known as robust control invariant, if there exists control policy $\mu$ s.t. $\Theta=\mu(\Omega) \subseteq U$, and 
\begin{multline}
    \forall t\in \mathbb{N}, \forall x(t)\in \Omega , \exists u(t)\in U, \forall d(t) \in D \Rightarrow x(t+1) \in \Omega.
\end{multline}
\end{definition}


In this paper, we deal with networks of coupled linear systems of the following form:  
\begin{multline}
\label{subsystems_LTV}
    x_i(t+1)=A_{ii}(t)x_i(t) + B_{ii}(t)u_i(t) + \sum_{j\ne i}{A_{ij}(t)x_j(t)} \\ + \sum_{j\ne i}{B_{ij}(t)u_j(t)} + d_i(t),
\end{multline}
where $x_i(t)\in X_i(t), X_i(t) \subseteq  \mathbb{R}^{n_i}$, $u_i(t) \in U_i(t), U_i(t) \subseteq  \mathbb{R}^{m_i}$, and $d_i(t)\in D_i(t), D_i(t) \subseteq \mathbb{R}^{n_i}$ are the state, control input, and disturbance for subsystem $i$, respectively. The (time-varying) matrices $A_{ii}(t) \in \mathbb{R}^{n_i \times n_i}$ and $B_{ii}(t) \in \mathbb{R}^{n_i \times m_i}$ characterize the internal dynamics of subsystem $i$. Also, $A_{ij}(t) \in \mathbb{R}^{n_i \times n_j}$ and $B_{ij}(t) \in \mathbb{R}^{n_i \times m_j}$ characterizes the coupling effects of subsystem $j$ on subsystem $i$. The index of a subsystem is shown by $i \in \mathcal{I}$, where $\mathcal{I}$ is the index set for subsystems and $t \in \mathbb{N}_h$ is the time step. 

We assume time-variant sets $X_i(t),U_i(t)$, and $D_i(t)$ are zonotopes $\mathcal{Z}(\bar{x}_i(t), G_i^{x}(t))$, $\mathcal{Z}(\bar{u}_i(t), G_i^{u}(t))$, and $\mathcal{Z}(\bar{d}_i(t), G_i^{d}(t))$, respectively, where the vectors $\bar{x}_i(t),\bar{d}_i(t) \in \mathbb{R}^{n_i}$, and $\bar{u}_i(t) \in  \mathbb{R}^{m_i}$ and matrices $G_i^x(t) \in \mathbb{R}^{n_i \times p_i^x(t)}$, $G_i^u(t) \in \mathbb{R}^{m_i \times p_i^u(t)}$, and $G_i^d(t) \in \mathbb{R}^{n_i \times p_i^d(t)}$ are given. Note that these assumptions are not restrictive, since zonotopes can tightly under/over approximate symmetric shapes. 

The controller for each subsystem has access only to its own state information:
\begin{equation}
\mu_i(t): X_i(t) \rightarrow U_i(t), \text{~(Finite Horizon)}.    
\end{equation} 
\begin{equation}
\mu_i: X_i \rightarrow U_i, \text{~(Infinite Horizon)}.    
\end{equation} 
The first problem is finding these decentralized controllers that lead to viable sets. Decentralized controllers have the advantage that do not require communications in the networked linear system. 

\begin{problem}[Decentralized Finite-time Viable Sets] \label{problem_viable}
Given system \eqref{subsystems_LTV}, find sets $\Omega_i(t)$ and controllers $\mu_i(t), i \in \mathcal{I}$ and $t \in \mathbb{N}_h$ such that $\Omega_i(t) \subseteq X_i(t)$, $\Theta_i(t) \subseteq U_i(t)$, and
\begin{multline} \label{eq:subrci}
        \forall x_i(t) \in \Omega_i(t), \forall x_j(t)\in \Omega_j(t),
        \forall u_j(t)\in \Theta_j(t), (j\ne i) , \\ \forall d_i(t) \in D_i(t) \Rightarrow x_i(t+1)\in \Omega_i(t+1),
\end{multline}
where $\Theta_i(t)=\mu_i(t)(\Omega_i(t))$.
\end{problem}


The second problem is finding decentralized infinite-time viable sets or robust set-invariance controllers for each subsystem.

\begin{problem}[Decentralized Infinite-time Viable Sets] \label{Problem_RCI}
Given each subsystem in time-invariant form of (\ref{subsystems_LTV}) (drop ($t$)), find sets $\Omega_i$ and $\mu_i$ for all $i \in \mathcal{I}$, such that $\Omega_i \subseteq X_i$, $\Theta_i \subseteq U_i$, and
\begin{multline} \label{RCI_prob1}
        \forall x_i(t) \in \Omega_i , \forall x_j(t)\in \Omega_j , \forall u_j(t)\in \Theta_j (j\ne i) \\ , \forall d_i(t) \in D_i \Rightarrow x_i(t+1)\in \Omega_i,
\end{multline}
where $\Theta_i= \mu_i(\Omega_i)$. 
\end{problem}
Note that the optimality criteria can be added to both Problem \ref{problem_viable} and \ref{Problem_RCI}. We also note that the concept of infinite-time viable sets and Problem \ref{Problem_RCI} can be extended to $T$-periodic systems where $A(t+T)=A(t), B(t+T) = B(t), X(t+T)=X(t), U(t+T)=U(t), D(t+T)=D(t), \forall t \in \mathbb{N}$. We omit studying this class of systems in this paper.

\section{Assume-Guarantee contracts}
\label{sec_AG}

In this section, we formalize assume-guarantee contracts for a single system and provide details on the convex parameterization of contracts and controllers.  

\subsection{Definitions}
\begin{definition}[Assume-Guarantee contract]
An assume-guarantee contract for system \eqref{singlesystem_LTV} is a pair $\mathcal{C} = (\mathcal{A},\mathcal{G})$, where:
\begin{itemize}
    \item $\mathcal{A}$ is the assumption, which is the  sequence of disturbance sets $\mathcal{D}(t), t \in \mathbb{N}_h$ (finite-time LTV), or the static set $\mathcal{D}$ (infinite-time LTI) ;
    \item $\mathcal{G}$ is the guarantee, which is  the sequence of admissible states  $\mathcal{X}(t)$ and control inputs $\mathcal{U}(t)$ (finite-time LTV), or static sets $\mathcal{X}, \mathcal{U}$ (infinite-time LTI).  
\end{itemize}
\end{definition}

\begin{definition}
A contract is \emph{valid} if its guarantees respect the system constraints $\mathcal{X}(t) \subseteq X(t), \forall t \in \mathbb{N}_h, \mathcal{U}(t) \subseteq U(t), \forall t \in \mathbb{N}_{h-1}$ (finite horizon), or $\mathcal{X} \subseteq X, \mathcal{U} \subseteq U$ (infinite horizon).  
\end{definition}

\begin{definition}
A valid contract is \emph{satisfiable} if it is possible to find a control policy and viable sets such that $\Omega(t) \subseteq \mathcal{X}(t), \forall t \in \mathbb{N}_h, \Theta(t) \subseteq \mathcal{U}(t), \forall t \in \mathbb{N}_{h-1}$ (finite horizon), or $\Omega \subseteq \mathcal{X}, \Theta \subseteq \mathcal{U}$ (infinite horizon).
\end{definition}

We show that the satisfiability of contracts can be checked using convex programs, which encode a specific form of control policies. 

\subsection{Finite Horizon Contract Satisfiability}
\begin{theorem} \label{Thrm_viable set_single} 
Given a system in the form (\ref{singlesystem_LTV}), a finite horizon contract is satisfiable, if
 $\exists k \in \mathbb{N}$, vectors $\bar{\mathrm{x}}(t)\in \mathbb{R}^n , \bar{\mathrm{u}}(t) \in \mathbb{R}^m$, and matrices $ T(t)\in \mathbb{R}^{n \times l(t)}$ and $M(t)\in \mathbb{R}^{m \times l(t)}$ where $l(0)=k$ and $l(t \ne 0) = k + \sum_{t=0}^{t-1}{p(t)}$ such that the following relation holds:\\
\begin{subequations} \label{singleviablecon}
\begin{equation} \label{single_viable_constraint}
    [A(t)T(t)+B(t)M(t) , G^d(t)] = T(t+1), \forall t \in \mathbb{N}_{h-1},
\end{equation}
\begin{equation}
    A(t) \bar{\mathrm{x}}(t)+ B(t) \bar{\mathrm{u}}(t) + \bar{d}(t) = \bar{\mathrm{x}}_{t+1}, \forall t \in \mathbb{N}_{h-1},
\end{equation}
\begin{equation} \label{singleviable_statecon}
    \mathcal{Z}(\bar{\mathrm{x}}(t),T(t)) \subseteq X(t), \forall t \in \mathbb{N}_{h},
\end{equation}
\begin{equation} \label{singleviable_controlcon}
    \mathcal{Z}(\bar{u}(t),M(t)) \subseteq U(t), \forall t \in \mathbb{N}_{h-1}.
\end{equation} 
\end{subequations}
then, $ \Omega(t)=\mathcal{Z}(\bar{\mathrm{x}}(t),T(t)) , \forall t\in \mathbb{N}_h$ is a sequence of viable sets for horizon $h$. Moreover, the controller is given by the following relation:
\begin{equation}
\label{eq_control_finite}
\mu(t)(x)=\bar{\mathrm{u}}(t)+M(t) \zeta(x), x=\bar{\mathrm{x}}(t)+T(t)\zeta(x), \zeta \in \mathbb{B}_{l(t)},
\end{equation}
and $\Theta(t)=\mathcal{Z}(\bar{\mathrm{u}}(t),M(t))$.
\end{theorem}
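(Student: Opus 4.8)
The plan is to verify directly that the sequence $\Omega(t) = \mathcal{Z}(\bar{x}(t),T(t))$ satisfies Definition~\ref{def:viable}, using the parameterized map in \eqref{eq_control_finite} as the witnessing policy $\mu$. The crux is a single one-step zonotope propagation: I would take an arbitrary $x(t) \in \Omega(t)$, write it as $x(t) = \bar{x}(t) + T(t)\zeta$ for some $\zeta \in \mathbb{B}_{l(t)}$, apply the matching control $u(t) = \bar{u}(t) + M(t)\zeta$, take an arbitrary disturbance $d(t) = \bar{d}(t) + G^d(t)\eta$ with $\eta \in \mathbb{B}_{p(t)}$, and substitute all three into the dynamics \eqref{singlesystem_LTV}.

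First I would collect the constant part $A(t)\bar{x}(t) + B(t)\bar{u}(t) + \bar{d}(t)$, which equals $\bar{x}(t+1)$ by the centroid recursion. The state-dependent remainder is $[A(t)T(t)+B(t)M(t)]\zeta + G^d(t)\eta$, which I would regroup as the single matrix--vector product $[A(t)T(t)+B(t)M(t),\, G^d(t)]\,[\zeta^\top,\eta^\top]^\top$. Invoking \eqref{single_viable_constraint}, the stacked generator matrix is exactly $T(t+1)$, so $x(t+1) = \bar{x}(t+1) + T(t+1)[\zeta^\top,\eta^\top]^\top$. The key observation is that the stacked coefficient vector has infinity-norm at most $1$ (being the maximum of two such norms) and dimension $l(t)+p(t)=l(t+1)$, hence lies in $\mathbb{B}_{l(t+1)}$; therefore $x(t+1) \in \mathcal{Z}(\bar{x}(t+1),T(t+1)) = \Omega(t+1)$, which is precisely the invariance requirement \eqref{eq:4}. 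Since $x(t)$ and $d(t)$ were arbitrary, this holds for all $t \in \mathbb{N}_{h-1}$.

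It then remains to discharge the validity/guarantee bookkeeping. Constraint \eqref{singleviable_statecon} gives $\Omega(t) \subseteq X(t)$ directly, and for every admissible $\zeta \in \mathbb{B}_{l(t)}$ the control value $\bar{u}(t)+M(t)\zeta$ lies in $\mathcal{Z}(\bar{u}(t),M(t))$, so $\Theta(t)=\mu(t)(\Omega(t)) \subseteq \mathcal{Z}(\bar{u}(t),M(t)) \subseteq U(t)$ by \eqref{singleviable_controlcon}. Together these place the constructed viable sets inside the admissible state and input sets appearing in \eqref{singleviable_statecon}--\eqref{singleviable_controlcon}, which is exactly what is needed for the contract to be satisfiable, completing the argument.

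The one subtlety I would flag is that the representation $x = \bar{x}(t)+T(t)\zeta$ need not be unique when $T(t)$ has more columns than rows, so \eqref{eq_control_finite} must be read as a selection: for each $x$ one fixes any admissible $\zeta(x)$ and uses the corresponding $M(t)\zeta(x)$. This is the only place where care is warranted, but it is harmless here, because the invariance computation above goes through for \emph{every} admissible $\zeta$; hence any selection yields a valid feedback, and no measurability or uniqueness property of the selection is required for the set-valued viability conclusion. Everything else is the routine substitution sketched above.
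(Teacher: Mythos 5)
Your proposal is correct and takes essentially the same route as the paper's own proof: both substitute the parameterized feedback \eqref{eq_control_finite} into the dynamics \eqref{singlesystem_LTV} and use the generator-stacking constraint \eqref{single_viable_constraint} to conclude that the one-step image of $\Omega(t)$ lands in $\Omega(t+1)$, with \eqref{singleviable_statecon}--\eqref{singleviable_controlcon} discharging the state and input containments. The only difference is presentational -- the paper argues at the set level via the Minkowski-sum property \eqref{eq_zonotope_minkowski}, while you unpack the same computation pointwise with explicit coefficient vectors $\zeta,\eta$ and additionally flag (correctly, and harmlessly) that $\zeta(x)$ is a selection rather than a unique representation.
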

\begin{proof}
The proof is by construction. Substituting \eqref{eq_control_finite} in (\ref{singlesystem_LTV}) yields:
\begin{multline*}
    x(t+1)     \in   A(t)(T(t) b + \bar{\mathrm{x}}(t))+B(t)(M(t) b + \bar{\mathrm{u}}(t) ) + \bar{d}(t) \\ \oplus G^d(t) \mathbb{B}_{p(t)}, 
\end{multline*}
where the right hand side set is subset of the following set:
\begin{multline*}
    \{A(t) \bar{\mathrm{x}}(t)+ B(t) \bar{\mathrm{u}}(t) + \bar{d}(t)\} \\ \oplus [A(t)T(t)+B(t)M(t),G^d(t)] \mathbb{B}_{l(t)+p(t)},
\end{multline*}
by using the Minkowski sum property of zonotopes \eqref{eq_zonotope_minkowski}, it is straightforward to reach to \eqref{singleviablecon}. 

\end{proof}

Note that the structure of matrices $T$ and $M$ is not unique and the value of $k$ can be changed to derive different $T(t)$ and $M(t)$. This enables iterations over different $k$.\\

\begin{lemma}[Zonotope Containment \cite{sadraddini2019linear}]
\label{sadra_zon_containment}
Given two zonotopes $\mathcal{Z}(c_1,G_1)$ and $\mathcal{Z}(c_2,G_2)$, where $c_1 ,c_2 \in \mathbb{R}^q$ and $G_1 \in \mathbb{R}^{q \times r}$, $G_2 \in \mathbb{R}^{q \times s}$, we have $\mathcal{Z}(c_1,G_1) \subseteq \mathcal{Z}(c_2,G_2)$ if $\exists \Gamma \in \mathbb{R}^{s \times r}$ and $\gamma \in \mathbb{R}^s$ s.t.
\begin{subequations} \label{zon_containment}
\begin{equation}
    G_1 = G_2 \Gamma, 
\end{equation}
\begin{equation}
     c_2 - c_1 = G_2 \gamma, 
\end{equation}
\begin{equation} \label{con_zon_containment}
    || [\Gamma,\gamma] ||_\infty \leq 1.
\end{equation}
\end{subequations}
\end{lemma}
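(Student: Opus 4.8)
The plan is to establish the inclusion pointwise: I would take an arbitrary element of $\mathcal{Z}(c_1,G_1)$ and use the three hypotheses to exhibit an explicit coefficient vector certifying that the same point lies in $\mathcal{Z}(c_2,G_2)$. Recalling the definition $\mathcal{Z}(c_1,G_1) = \{c_1 + G_1 \xi : \xi \in \mathbb{B}_r\}$, I fix a point $x = c_1 + G_1 \xi$ with $\|\xi\|_\infty \le 1$ and aim to write it in the form $c_2 + G_2 \eta$ for some $\eta$ with $\|\eta\|_\infty \le 1$.

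The second step is a direct substitution of the two equality conditions. Using $G_1 = G_2 \Gamma$ and $c_1 = c_2 - G_2\gamma$ (the latter rearranged from $c_2 - c_1 = G_2\gamma$), I rewrite
\begin{equation*}
x = c_1 + G_1 \xi = (c_2 - G_2 \gamma) + G_2 \Gamma \xi = c_2 + G_2(\Gamma \xi - \gamma).
\end{equation*}
Setting $\eta := \Gamma \xi - \gamma$, the point becomes $x = c_2 + G_2 \eta$, so it only remains to verify the norm bound $\|\eta\|_\infty \le 1$, which by definition of $\mathcal{Z}(c_2,G_2)$ places $x$ inside the second zonotope.

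The crux is handling the affine offset $-\gamma$ together with the linear map $\Gamma$. The observation is that $\eta$ is a single matrix-vector product against the augmented generator matrix: writing $\tilde{\xi} := [\xi^\top, -1]^\top$, one has $\eta = [\Gamma,\gamma]\,\tilde{\xi}$, and $\|\tilde{\xi}\|_\infty \le 1$ since $\|\xi\|_\infty \le 1$ and $|{-1}| = 1$. Applying submultiplicativity of the induced infinity norm together with the hypothesis $\|[\Gamma,\gamma]\|_\infty \le 1$ yields $\|\eta\|_\infty \le \|[\Gamma,\gamma]\|_\infty \, \|\tilde{\xi}\|_\infty \le 1$. (Equivalently, componentwise, $|\eta_k| \le \sum_j |\Gamma_{kj}| + |\gamma_k|$ is exactly the $k$-th absolute row sum of $[\Gamma,\gamma]$, which the constraint bounds by $1$.) Since $x$ was arbitrary, the inclusion $\mathcal{Z}(c_1,G_1) \subseteq \mathcal{Z}(c_2,G_2)$ follows.

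There is no genuine obstacle here: the argument is essentially bookkeeping, and the only mildly subtle point is recognizing that stacking $\Gamma$ and $\gamma$ into one matrix is precisely what lets the single constraint $\|[\Gamma,\gamma]\|_\infty \le 1$ simultaneously control both the linear distortion of the generators and the recentering, via the augmentation trick that embeds the affine part into a homogeneous product.
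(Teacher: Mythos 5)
Your proof is correct. Note that the paper itself gives no proof of this lemma at all: it is stated as a known result imported from \cite{sadraddini2019linear}, so there is no internal argument to compare against. Your pointwise argument is the natural direct one: substituting $G_1 = G_2\Gamma$ and $c_1 = c_2 - G_2\gamma$ turns any point $c_1 + G_1\xi$ with $\|\xi\|_\infty \le 1$ into $c_2 + G_2\eta$ with $\eta = [\Gamma,\gamma]\,[\xi^\top,-1]^\top$, and the row-sum bound $\|[\Gamma,\gamma]\|_\infty \le 1$ gives $\|\eta\|_\infty \le 1$ exactly as you say. For context, the cited reference obtains this condition as a special case of a more general linear encoding for containment of affine transformations of polytopes (AH-polytopes), derived via duality/Farkas-type arguments, which is what also yields the claim (quoted after the lemma) that the sufficiency gap is typically small; your elementary computation buys self-containedness and transparency, but only establishes the sufficiency direction---which is all the lemma asserts, so nothing is missing.
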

While Lemma \ref{sadra_zon_containment} states a sufficiency condition, it was shown in \cite{sadraddini2019linear} that its necessity gap is often very small.

Using the Lemma \ref{sadra_zon_containment}, the constraints in \eqref{singleviable_statecon} and \eqref{singleviable_controlcon} become linear in $T(t), M(t), \bar{x}(t),$ and $ \bar{u}(t)$. Therefore, one can check satisfiability of contracts using convex linear programs. The cost function is ad-hoc. We typically choose to minimize the summation of Frobenious norms of $T(t)$ for $t\in \mathbb{N}_h$. 

\begin{remark}
Note that the order of zonotope $\Omega(t)$ is increasing at each time step. This makes the number of variables and constraints in the program grow quadratically with $h$. We can decrease the complexity by fixing the number of columns of $T(t)$ and $M(t)$ to $k$ and change the equation (\ref{single_viable_constraint}) to
\begin{equation}
    [A(t)T(t)+ B(t)M(t) , G(t)^d] =  [0_{n \times p(t)} , T_{t+1}].
\end{equation}
However, this modification leads to a more conservative computation and may cause infeasibility.
\end{remark}

\subsection{Infinite Horizon Contract Satisfiability}
Inspired by the method in \cite{rakovic2007optimized}, we provide a linear programming approach to compute robust control invariant sets. 
\begin{theorem} 
\label{thrm_single_RCI}
An infinite horizon contract is satisfiable, if
 $\exists k \in \mathbb{N}, \beta \in [0,1)$, vectors $\bar{\mathrm{x}}\in \mathbb{R}^n , \bar{\mathrm{u}} \in \mathbb{R}^m$, and matrices $ T\in \mathbb{R}^{n \times k}$ and $M\in \mathbb{R}^{m \times k}$, such that the following relation holds
\begin{subequations}
\begin{equation} \label{eq:thrm1}
    [AT+BM , G^d] = [ E , T],
\end{equation}
\begin{equation} \label{Econ}
    \mathcal{Z}(0,E) \subseteq \mathcal{Z}(0,\beta G^d),
\end{equation}
\begin{equation} \label{eq:thrm2}
    \dfrac{1}{1-\beta} \mathcal{Z}(\bar{\mathrm{x}},T) \subseteq X,
\end{equation}
\begin{equation} \label{eq:thrm3}
    \dfrac{1}{1-\beta} \mathcal{Z}(\bar{\mathrm{u}},M) \subseteq U,
\end{equation} 
\begin{equation}
    A\bar{\mathrm{x}} + B \bar{\mathrm{u}} + \bar{d} = \bar{\mathrm{x}},
\end{equation}
\end{subequations}
then $\Omega=\mathcal{Z}(\bar{\mathrm{x}},(1-\beta)^{-1}T)$ is a robust control invariant set. Furthermore, the controller is given by
\begin{equation}
\label{eq_control_infinite}
\mu(x)=\bar{\mathrm{u}}+M \zeta(x), x=\bar{\mathrm{x}}+T\zeta(x), \zeta \in \mathbb{B}_k,
\end{equation}
and $\Theta=\dfrac{1}{1-\beta}\mathcal{Z}(\bar{\mathrm{u}},M)$.
\end{theorem}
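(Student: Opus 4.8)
The plan is a proof by construction that mirrors the finite-horizon argument of Theorem \ref{Thrm_viable set_single}, but now exploits the contractivity parameter $\beta$ to collapse an infinite Minkowski series into a single step, following \cite{rakovic2007optimized}. First I would read off the content of the structural equation \eqref{eq:thrm1}: comparing the two horizontal stackings column-wise (or, set-theoretically, using the first block and then applying Minkowski cancellation of $\mathcal Z(0,E)$) yields the two facts $AT+BM=E$ and $\mathcal Z(0,T)=\mathcal Z(0,G^d)$. The first records the closed-loop image of the generator zonotope, and the second states that the disturbance generators coincide with $T$. I would treat the containments \eqref{Econ}, \eqref{eq:thrm2}, \eqref{eq:thrm3} as genuine set inclusions, each of which is ultimately certified in the program by Lemma \ref{sadra_zon_containment}.

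Next I would dispatch admissibility: conditions \eqref{eq:thrm2} and \eqref{eq:thrm3} state precisely that $\Omega=(1-\beta)^{-1}\mathcal Z(\bar{\mathrm x},T)\subseteq X$ and $\Theta=(1-\beta)^{-1}\mathcal Z(\bar{\mathrm u},M)\subseteq U$, so the guarantee/constraint requirements of Definition \ref{def:RCI} hold once invariance is established. For the invariance step, I would assign to each $x=\bar{\mathrm x}+(1-\beta)^{-1}T\zeta\in\Omega$ with $\zeta\in\mathbb B_k$ the control $u=\bar{\mathrm u}+(1-\beta)^{-1}M\zeta\in\Theta$ given by \eqref{eq_control_infinite}, substitute into $x^+=Ax+Bu+d$, and split the result into a center and a generator part.

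The core is then a short chain of inclusions. Using the fixed-point condition $A\bar{\mathrm x}+B\bar{\mathrm u}+\bar d=\bar{\mathrm x}$ for the center and $AT+BM=E$ for the generators, I obtain $x^+\in\bar{\mathrm x}\oplus(1-\beta)^{-1}\mathcal Z(0,E)\oplus\mathcal Z(0,G^d)$, since every admissible $d$ satisfies $d-\bar d\in\mathcal Z(0,G^d)$. Applying the contraction \eqref{Econ} gives $(1-\beta)^{-1}\mathcal Z(0,E)\subseteq\mathcal Z\big(0,\tfrac{\beta}{1-\beta}G^d\big)$, and since Minkowski sums of common-generator zonotopes add their scales, $\mathcal Z\big(0,\tfrac{\beta}{1-\beta}G^d\big)\oplus\mathcal Z(0,G^d)=\mathcal Z\big(0,\tfrac{1}{1-\beta}G^d\big)$. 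Finally $\mathcal Z(0,G^d)=\mathcal Z(0,T)$ identifies this with $\mathcal Z(\bar{\mathrm x},(1-\beta)^{-1}T)=\Omega$, so $x^+\in\Omega$ for every admissible disturbance, which is exactly the invariance of Definition \ref{def:RCI}.

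The step I expect to be the crux is the scaling arithmetic tying the contraction factor $\beta$ to the blow-up factor $(1-\beta)^{-1}$: the identity $\tfrac{\beta}{1-\beta}+1=\tfrac{1}{1-\beta}$ is the one-step shadow of the geometric series $\sum_i\beta^i=(1-\beta)^{-1}$ underlying Rakovi\'c's minimal-invariant-set construction, and it works only because, after the contraction, all generator contributions live in the common column space of $G^d=T$. Care is needed to justify the rule $\mathcal Z(0,aG)\oplus\mathcal Z(0,bG)=\mathcal Z(0,(a+b)G)$ for $a,b\ge 0$ and to confirm that \eqref{eq:thrm1} genuinely delivers $\mathcal Z(0,T)=\mathcal Z(0,G^d)$ rather than a mere one-sided inclusion; everything else reduces to the substitution already rehearsed in Theorem \ref{Thrm_viable set_single}.
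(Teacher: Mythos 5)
Your proof breaks at its very first step: equation \eqref{eq:thrm1} does not yield the two facts you extract from it. Count columns: $AT+BM$ has $k$ columns, $G^d$ has $p$, and since $T$ on the right has $k$ columns, $E$ must have $p$. The blocks on the two sides therefore do \emph{not} align unless $k=p$, so the identifications $AT+BM=E$ and $\mathcal{Z}(0,T)=\mathcal{Z}(0,G^d)$ are not consequences of \eqref{eq:thrm1}; they only even type-check in the degenerate case $k=p$. What \eqref{eq:thrm1} actually encodes (this is Rakovi\'c's parameterization) is a shift structure: $E$ is the first $p$ columns of $AT+BM$, and $T$ consists of the remaining columns of $AT+BM$ with $G^d$ appended at the end. (A sanity check: under your reading, Remark \ref{rmrk_RCI}'s simplification $E=0_{n\times p}$ would force $AT+BM=0$ and $T=G^d$, i.e.\ one-step closed-loop nilpotency and an invariant set that is a translate of a scaled disturbance zonotope --- far too restrictive to be the intended content.) The only consequence of \eqref{eq:thrm1} you are entitled to use is the joint Minkowski-sum identity
\begin{equation*}
(AT+BM)\mathbb{B}_k \oplus \mathcal{Z}(0,G^d) \;=\; \mathcal{Z}(0,E)\oplus \mathcal{Z}(0,T),
\end{equation*}
and your fallback route (``use the first block, then cancel $\mathcal{Z}(0,E)$'') presupposes $(AT+BM)\mathbb{B}_k=\mathcal{Z}(0,E)$, which is the same unfounded claim. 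You flagged this risk yourself at the end; it is precisely where the argument fails. Concretely, your key step $x^+\in\bar{\mathrm{x}}\oplus(1-\beta)^{-1}\mathcal{Z}(0,E)\oplus\mathcal{Z}(0,G^d)$ is unjustified, because $(AT+BM)\mathbb{B}_k$ is in general much larger than $\mathcal{Z}(0,E)$; $E$ captures only $p$ of its $k$ columns.

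The gap cannot be patched by massaging your inclusion chain, because the one-step update mixes scalings: the generator part of $x^+$ carries the factor $(1-\beta)^{-1}$ (inherited from $\zeta$ parameterizing $\Omega$), while the disturbance part $d-\bar d\in\mathcal{Z}(0,G^d)$ is unscaled, and the identity above ties $AT+BM$ to $E$ and $T$ only \emph{jointly} with an unscaled copy of $\mathcal{Z}(0,G^d)$. This is exactly the difficulty the paper's proof is built to resolve: from the identity above and \eqref{Econ} it obtains $(AT+BM)\mathbb{B}_k\oplus\mathcal{Z}(0,G^d)\subseteq T\mathbb{B}_k\oplus\mathcal{Z}(0,\beta G^d)$, then cancels $\mathcal{Z}(0,\beta G^d)$ from both sides in the Pontryagin-difference sense (a valid cancellation for compact convex sets, argued via support functions, using $\mathcal{Z}(0,G^d)=(1-\beta)\mathcal{Z}(0,G^d)\oplus\beta\mathcal{Z}(0,G^d)$), arriving at $(AT+BM)\mathbb{B}_k\oplus(1-\beta)\mathcal{Z}(0,G^d)\subseteq T\mathbb{B}_k$, and only then multiplies by $(1-\beta)^{-1}$. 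Your scaling arithmetic $\tfrac{\beta}{1-\beta}+1=\tfrac{1}{1-\beta}$ and the rule $\mathcal{Z}(0,aG)\oplus\mathcal{Z}(0,bG)=\mathcal{Z}(0,(a+b)G)$ are correct but operate on objects you have not legitimately produced; the cancellation step (or an equivalent support-function argument) is the missing ingredient, and with it the detour through $T=G^d$ becomes unnecessary.
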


\begin{proof}
Substituting policy \eqref{eq_control_infinite} in \eqref{singlesystem_LTV}, we obtain the relation \eqref{eq:thrm1}. In order to prove invariance, we observe that:
\begin{equation}
    (AT+BM)\mathbb{B}_k \oplus \mathcal{Z}(0,G^d) \subseteq T\mathbb{B}_k \oplus \mathcal{Z}(0,\beta G^d). 
\end{equation}
We substract $\mathcal{Z}(0,\beta G^d)$ from both sides in Pontryagin difference sense \cite{kolmanovsky1998theory}, which we claim that is a valid operation when both sides are convex polytopes. We omit the proof as it is based on the properties of support functions \cite{rockafellar2015convex} of convex sets. We arrive in: 
\begin{equation}
\label{eq_middle_sadra}
    (AT+BM)\mathbb{B}_k \oplus (1-\beta) \mathcal{Z}(0,G^d) \subseteq T\mathbb{B}_k. 
\end{equation}
By multiplying both sides of \eqref{eq_middle_sadra} by $\dfrac{1}{1-\beta}$ we arrive in the conclusion that $\dfrac{1}{1-\beta} \mathcal{Z}(\bar{\mathrm{x}},T)$ is a robust control invariant set with $\Theta=\dfrac{1}{1-\beta}\mathcal{Z}(\bar{\mathrm{u}},M)$, and the proof is complete.
\end{proof}

Similar to \eqref{eq_control_finite}, there exists a sufficient linear encoding for \eqref{eq_control_infinite}. The feasibility of the linear program implies satisfiability of the contract.

\begin{remark} \label{rmrk_RCI}
We can simplify Theorem \ref{thrm_single_RCI} by assuming the variables $E = 0_{n \times p}$ and $\beta =0$,  as a result, there is no need for constraint (\ref{Econ}). However, it adds to conservativeness and may lead to infeasibility.
\end{remark}

\section{Composition of Parametric Assume-Guarantee Contracts}
\label{sec_AGR}

Now, we shift our focus back to the network of coupled systems in \eqref{subsystems_LTV} and provide the first step toward solutions for Problem \ref{problem_viable} and \ref{Problem_RCI}. Two ideas are presented in this section: (i) we decouple subsystems by viewing all the coupling effects of other subsystems as disturbances. (ii) we make the contract sets parametric, hence the disturbance sets corresponding to couplings also become parametric to search over. The technical details are provided as follows.

\subsection{Composition Correctness}
Unlike the case in the single system where guarantees were obtained from given assumptions using the controller synthesis program, it is much more complicated in the the case of dynamically coupled systems. Because the guarantee of one subsystem affects the assumptions of other subsystems as a result of looking at the coupling effect as a disturbance. We break this circularity by treating the coupling terms in system \eqref{subsystems_LTV} as a disturbance: 
\begin{multline} \label{idea}
     x_i(t+1) = A_{ii}(t)x_i(t) + B_{ii}(t) u_i(t) + \\ \underbrace{\sum_{j \ne i} A_{ij}(t)x_j(t) + \sum_{j \ne i} B_{ij}(t)u_j(t)+ d_i(t)}_{d_i^{aug}(t)},
\end{multline}
where $d_i^{aug}(t) \in D^{aug}_i(t)$ is the augmented disturbance. 

Using this idea and knowing that the assume-guarantee contracts are common knowledge among all subsystems, there is no need for communication between subsystems, facilitating fully decentralized control policies. However, we must first ensure that the controller at every subsystem is correctly designed for the disturbance it expects, which leads us to define a criterion for correctness of a set of assume-guarantee contracts. 

\begin{definition}[Composition Correctness]
Consider a set of valid assume-guarantee contracts $\mathcal{C}_i = (\mathcal{A}_i,\mathcal{G}_i)$, where 
\begin{itemize}
\item $\mathcal{A}_i=\{W_i(t), t \in \mathbb{N}_h\}$ (finite horizon) or $\mathcal{A}_i=W_i$ (infinite horizon);
\item $\mathcal{G}_i=\{(\mathcal{X}_i(t),\mathcal{U}_i(t)), t \in \mathbb{N}_h\}$ (finite horizon) or $\mathcal{G}_i=(\mathcal{X}_i,\mathcal{U}_i)$ (infinite horizon).
\end{itemize}
Then the composition is correct if the following relation holds:
\begin{equation}
D_i^{aug}(t) \subseteq W_i(t)   , \forall i \in \mathcal{I} , \forall t \in \mathbb{N}_{h-1} \text{~(Finite horizon)}
\end{equation}
\begin{equation}
D_i^{aug} \subseteq W_i   , \forall i \in \mathcal{I} , \text{~(Infinite horizon)}
\end{equation}
where 
\begin{equation}
\label{eq_d_aug}
    D_i^{aug}(t)  := \bigoplus_{j \ne i} A_{ij}(t)\mathcal{X}_j(t) \oplus \bigoplus_{j \ne i} B_{ij}(t)\mathcal{U}_j(t) \oplus D_i(t),
\end{equation}
(for infinite horizon, we just drop $(t)$ from \eqref{eq_d_aug}).
\end{definition}


The correctness criterion has a Boolean answer, stating whether the composition of contracts is correct or not. However, we desire a function that describes how far the contracts are from correctness. The following ``potential function'' exactly does that, by dedicating a score to a set of contracts:
\begin{definition}[Potential Function]
Given a set of contracts $\mathcal{C} = \{\mathcal{C}_i |  i \in \mathcal{I}\}$, its potential function is
\begin{equation}
    \mathcal{V}(\mathcal{C}) = \sum_{i \in \mathcal{I}} \mathcal{V}_i(\mathcal{C}),
\end{equation}
where $\mathcal{V}_i(\mathcal{C})$ is defined as follows:
\begin{equation}
    \mathcal{V}_i(\mathcal{C}) := \sum_{t\in \mathbb{N}_{h-1}} d_{DH}(W_i(t) , D_i^{aug}(t)) \text{~ (Finite Horizon)}.
\end{equation}
\begin{equation}
    \mathcal{V}_i(\mathcal{C}) :=  d_{DH}(W_i , D_i^{aug}) \text{~ (Infinite Horizon)}.
\end{equation}
\end{definition}The potential function is sum of the directed Hausdorff distances between the assumption set and the augmented disturbance set, which shows how much the augmented disturbance set is outside of the assumption set. 
When the potential function is zero, then the contracts composed correctly, which means each system is assuming a larger set of disturbances from the actual one happening.

\subsection{Parametric Contracts}
First, we fix sets $\mathcal{X}_i(t)$ and $ \mathcal{U}_i(t)$ in the following form:
\begin{equation}
    \mathcal{X}_i(t) =  \mathcal{Z}(\bar{c}^x_i(t) , C^x_i(t)),
\end{equation}
\begin{equation}
     \mathcal{U}_i(t) =  \mathcal{Z}(\bar{c}^u_i(t) , C^u_i(t)),
\end{equation}
for all $i \in \mathcal{I}$ and $t \in \mathbb{N}_h$, where $\bar{c}^x_i(t) \in \mathbb{R}^{n_i}$, $\bar{c}^u_i(t) \in \mathbb{R}^{m_i}$ and matrices $C^x_i(t) \in \mathbb{R}^{n_i \times \zeta^x_i(t)}$ and $C^u_i(t) \in \mathbb{R}^{m_i \times \zeta^u_i(t)}$. 

Now we introduce parameters $\alpha=\{\alpha^x_i,\alpha^u_i\}_{i \in \mathcal{I}}$ and sets $\mathcal{X}_i(t,\alpha^x_i(t)) \subseteq X_i(t)$ and $\mathcal{U}_i(t,\alpha^u_i(t)) \subseteq U_i(t)$  which are defined as follows:
\begin{subequations}
\begin{equation}
    \mathcal{X}_i(t,\alpha^x_i(t)) := \mathcal{Z}(\bar{c}^x_i(t) , C^x_i(t) \text{Diag}(\alpha^x_i(t))),
\end{equation}
\begin{equation}
    \mathcal{U}_i(t,\alpha^u_i(t)) := \mathcal{Z}(\bar{c}^u_i(t) , C^u_i(t) \text{Diag}(\alpha^u_i(t))),
\end{equation}
\end{subequations}
where $\alpha^x_i(t) \in \mathbb{R}^{\zeta^x_i(t)}$ and $\alpha^u_i(t) \in \mathbb{R}^{\zeta^u_i(t)}$. Basically we multiply each generator of zonotopes $\mathcal{X}_i(t)$ and $\mathcal{U}_i(t)$ by a scalar. 

So far the missing ingredient is the viable sets and contract satisfiability. 
Now we are in the position to combine parameterization with contract satisfiability. 
We bring the notation from Section \ref{sec_AG}. 

The parametric assumption sets are defined as follows: 
\begin{multline}
    W_i(\alpha ,t) := \bigoplus_{j \ne i}[ A_{ij}(t)\mathcal{X}_j(t,\alpha^x_j(t))  \oplus B_{ij}(t)\mathcal{U}_j(t,\alpha^u_j(t))] \\ \oplus D_i(t)   = \mathcal{Z}( \bar{d}_i^{aug}(t) , D_i^{aug}(t)).
\end{multline}
The parameteric guarantees are $\Omega_i(t), \Theta_i(t), t \in \mathbb{N}_{h-1}, i \in  \mathcal{I}$, with all the encoding from programs in Theorem \ref{Thrm_viable set_single} (or Theorem \ref{thrm_single_RCI} for infinite-time, with the drop of $(t)$). We bring all of them into the definition of parametric potential function, defined in the next section. 
\subsection{Parametric Potential Function}
Due to long equations, we provide the encoding only for the finite horizon case with noting that obtaining the infinite horizon case is similar. 
\begin{definition}[Parametric Potential Function]
The parametric potential function is:
\begin{equation}
\label{eq_v_alpha}
    \mathcal{V}(\alpha) = \sum_{i \in \mathcal{I}} \mathcal{V}_i(\alpha),
\end{equation}
where
\begin{multline}
\label{eq_v_i_alpha}
        \mathcal{V}_i(\alpha) := \sum_{t\in \mathbb{N}_{h-1}}  d_{DH}(\mathcal{X}_i(t,\alpha^x_i(t)) ,  \Omega_i(t) ) \\ +  d_{DH}(\mathcal{U}_i(t,\alpha^u_i(t)) ,  \Theta_i(t) ),
\end{multline}
and $\Omega_i(t), \Theta_i(t), t \in \mathbb{N}_{h-1}, i \in \forall \mathcal{I}$ are defined as in Theorem \ref{Thrm_viable set_single} (or Theorem \ref{thrm_single_RCI} for infinite-time, with the drop of $(t)$).
\end{definition}

Note that the parametric potential function is zero when
\begin{equation} \label{corectness_prime}
    \Omega_i(t) \subseteq \mathcal{X}_i(t,\alpha^x_i(t)) \wedge \Theta_i(t) \subseteq \mathcal{U}_i(t,\alpha^u_i(t)),
\end{equation}

Now we need a convex encoding of $V(\alpha)$. This comes at a small price of conservativeness due to the following Lemma, which is a modified version of Lemma \ref{sadra_zon_containment} that is useful in the subsequent sections. 
\begin{lemma}[Weighted Zonotope Containment]
The relation $\mathcal{Z}(\bar{c}_1,G_1) \subseteq \mathcal{Z}(\bar{c}_2,G_2\text{Diag}(\alpha))$ where $\alpha \in \mathbb{R}^s, \alpha>0$, and $s$ is the number of columns in $G_2$ holds, if the conditions in Lemma \ref{sadra_zon_containment} hold while constraint (\ref{con_zon_containment}) changes to:
\begin{equation} \label{con_zon_addmod}
     || [\Gamma,\gamma] ||_\infty \leq \alpha,
\end{equation}
\end{lemma}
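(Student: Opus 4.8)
The plan is to reduce the statement directly to Lemma \ref{sadra_zon_containment} (Zonotope Containment) by absorbing the diagonal scaling into a change of certificate variables, so that no new geometric argument is needed. First I would observe that the outer set appearing in the weighted lemma, $\mathcal{Z}(\bar{c}_2, G_2 \text{Diag}(\alpha))$, is itself an ordinary zonotope with centroid $\bar{c}_2$ and generator matrix $\tilde{G}_2 := G_2 \text{Diag}(\alpha)$. Consequently Lemma \ref{sadra_zon_containment} applies verbatim to the pair $\mathcal{Z}(\bar{c}_1, G_1) \subseteq \mathcal{Z}(\bar{c}_2, \tilde{G}_2)$: it suffices to exhibit matrices $\Gamma'$ and $\gamma'$ with $G_1 = \tilde{G}_2 \Gamma'$, $\bar{c}_2 - \bar{c}_1 = \tilde{G}_2 \gamma'$, and $\|[\Gamma',\gamma']\|_\infty \leq 1$.

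Using the positivity hypothesis $\alpha > 0$, which guarantees that $\text{Diag}(\alpha)$ is invertible, I would set $\Gamma' = \text{Diag}(\alpha)^{-1} \Gamma$ and $\gamma' = \text{Diag}(\alpha)^{-1} \gamma$, where $\Gamma,\gamma$ are the certificates supplied by the hypotheses. Then $\tilde{G}_2 \Gamma' = G_2 \text{Diag}(\alpha) \text{Diag}(\alpha)^{-1} \Gamma = G_2 \Gamma = G_1$, and likewise $\tilde{G}_2 \gamma' = G_2 \gamma = \bar{c}_2 - \bar{c}_1$, so the two equality certificates required by Lemma \ref{sadra_zon_containment} hold automatically. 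The only remaining point is the norm bound, for which I would make the row-wise reading of the modified constraint explicit.

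The key step is to interpret $\|[\Gamma,\gamma]\|_\infty \leq \alpha$ precisely: for each row index $k$ (ranging over the $s$ rows of $[\Gamma,\gamma]$, matching the entries of $\alpha$), the absolute row sum satisfies $\sum_j |\Gamma_{kj}| + |\gamma_k| \leq \alpha_k$. Dividing this by $\alpha_k > 0$ gives $\sum_j |\Gamma_{kj}|/\alpha_k + |\gamma_k|/\alpha_k \leq 1$, which is exactly $\sum_j |\Gamma'_{kj}| + |\gamma'_k| \leq 1$, i.e. $\|[\Gamma',\gamma']\|_\infty \leq 1$. Invoking Lemma \ref{sadra_zon_containment} with the certificate $(\Gamma',\gamma')$ then yields $\mathcal{Z}(\bar{c}_1, G_1) \subseteq \mathcal{Z}(\bar{c}_2, \tilde{G}_2) = \mathcal{Z}(\bar{c}_2, G_2 \text{Diag}(\alpha))$, which is the desired conclusion.

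I expect the main (and essentially the only) obstacle to be notational rather than mathematical: the inequality $\|[\Gamma,\gamma]\|_\infty \leq \alpha$ is an abuse of notation in which the scalar matrix $\infty$-norm is replaced by a vector of per-row $\ell_1$ norms, and the argument hinges on reading it correctly. Once that is pinned down, the whole proof is a one-line rescaling, and the assumption $\alpha > 0$ is precisely what makes $\text{Diag}(\alpha)$ invertible, so that the reduction is reversible and the weighted lemma is equivalent to the original one up to this componentwise rescaling.
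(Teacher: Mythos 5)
Your proof is correct and follows essentially the same route as the paper's: both reduce the weighted statement to Lemma~\ref{sadra_zon_containment} by the diagonal rescaling of certificates, using $\alpha>0$ to invert $\text{Diag}(\alpha)$ and the row-wise reading of $\|[\Gamma,\gamma]\|_\infty\leq\alpha$. The only difference is direction of presentation --- the paper renames $\text{Diag}(\alpha)\Gamma$ as the new certificate, while you divide by $\text{Diag}(\alpha)$ to recover the standard one --- which is the same one-line substitution.
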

\begin{proof}
Using constraints (\ref{zon_containment}) for $\mathcal{Z}(\bar{c}_1,G_1) \subseteq \mathcal{Z}(\bar{c}_2,G_2\text{Diag}(\alpha))$, we reach to:
\begin{equation}
G_1 = G_2 \text{Diag}(\alpha) \Gamma ,  c_2 - c_1 = G_2 \text{Diag}(\alpha) \gamma ,   || [\Gamma,\gamma] ||_\infty \leq 1
\end{equation}
We can replace $\text{Diag}(\alpha) \Gamma$ and $\text{Diag}(\alpha) \gamma$ with $\Gamma ^{new}$ and $\gamma^{new}$, respectively. So we have:
\begin{subequations}
\begin{equation}
G_1 = G_2 \Gamma^{new},
\end{equation}
\begin{equation}
c_2 - c_1 = G_2 \gamma^{new},
\end{equation}
\begin{equation} \label{zon_con_mod}
|| [\text{Diag}(\alpha^{-1}), \Gamma^{new},\text{Diag}(\alpha^{-1})\gamma^{new}] ||_\infty \leq 1
\end{equation}
\end{subequations}
where $\alpha^{-1}$ is element-wise. In  (\ref{zon_con_mod}), each row of matrix $[\Gamma^{new},\gamma^{new}]$ is divided by the corresponding element in vector $\alpha$. Because all the elements of $\alpha$ are positive, we can multiply the inequality by $\text{Diag}(\alpha)$ and have:
\begin{equation}
    || [\Gamma^{new},\gamma^{new}] ||_\infty \leq \alpha
\end{equation}
\end{proof}

The optimization problem for $V_i(\alpha)$ in \eqref{eq_v_i_alpha} is: 
\begin{subequations} \label{LTV_guarantees}
\begin{align}
    & \mathcal{V}_i(\alpha) = \underset{\mathrm{x}^i,T^i,\mathrm{u}^i,M^i , d_t^x , d_t^u}{\text{min}}
    \begin{aligned}[t]
       & \sum_{t \in \mathbb{N}_{h-1}} d^x_t + d^u_t 
    \end{aligned} \notag \\
    &\text{subject to} \notag \\
    & [A_{ii}(t)T^i_t+ B_{ii}(t) M^i_t , D_i^{aug}(t) ] =  [T^i_{t+1}], \hspace{2 mm} \forall t \in \mathbb{N}_{h-1} \label{eq:a} \\
    & A_{ii}(t)\bar{\mathrm{x}}^i_t + B_{ii}(t) \bar{\mathrm{u}}^i_t + \bar{d}_i^{aug}(t) = \bar{\mathrm{x}}^i_{t+1},  \hspace{2 mm} \forall t \in \mathbb{N}_{h-1} \label{eq:b}\\
    & \mathcal{Z}( \bar{d}_i^{aug}(t) , D_i^{aug}(t))  =  \bigoplus_{j \ne i}[A_{ij}(t)\mathcal{X}_j(t,\alpha^x_j(t)) \label{eq:c} \\ & \hspace{2 cm} \oplus B_{ij}(t)\mathcal{U}_j(t,\alpha^u_j(t))] \oplus D_i(t), \forall t \in \mathbb{N}_{h-1}\notag \\
    & \mathcal{Z}(\bar{\mathrm{x}}^i_t,T^i_t) \subseteq \ \mathcal{X}_i(t,\alpha^x_i(t)) \oplus \mathcal{Z}(0,d^x_t I_{n_i}) , \hspace{2 mm} \forall t \in \mathbb{N}_{h-1} \label{eq:d}\\
    & \mathcal{Z}(\bar{\mathrm{u}}^i_t,M^i_t) \subseteq \ \mathcal{U}_i(t,\alpha^u_i(t)) \oplus \mathcal{Z}(0,d^u_t I_{m_i}) , \hspace{1 mm} \forall t \in \mathbb{N}_{h-1} \label{eq:e}\\
    & \mathcal{Z}(\bar{\mathrm{x}}^i_h,T^i_h) \subseteq X_i(h),  \label{eq:f}\\
    &  0 \leq \alpha_i^{x}(t) \leq \alpha_i^{max,x}(t) , \hspace{2 mm} \forall t \in \mathbb{N}_{h-1} \label{eq:g}\\
    & 0 \leq \alpha_i^{u}(t) \leq \alpha_i^{max,u}(t) , \hspace{2 mm} \forall t \in \mathbb{N}_{h-1} \label{eq:h} .
\end{align}
\end{subequations}
Where $d^x_t$ and $d^u_t$ are scalars and $\mathrm{x}^i,T^i,\mathrm{u}^i$, and $M^i$ are sets containing all the $\mathrm{x}^i_t,T^i_t,\mathrm{u}^i_t$, and $M^i_t$, respectively. Constraints (\ref{eq:a}) and (\ref{eq:b}) come from Theorem \ref{Thrm_viable set_single} which enforce viability conditions. The constraint (\ref{eq:c}) is the paramterized assumption and (\ref{eq:f}) is for forcing the state of the final step to be inside the last admissible set, while the upper bounds $\alpha_i^{max,x}(t)$ and $\alpha_i^{max,u}(t)$, in constraints (\ref{eq:g}) and (\ref{eq:h}) (element-wise inequalities), play the same role for state and control input in other time-steps and ensure the validity of the contracts. But, they need to be driven beforehand, such that:
\begin{subequations}
\begin{equation}
     \mathcal{Z}(\bar{c}^x_i(t) , C^x_i(t)\text{Diag}(\alpha_i^{max,x}(t))) \subseteq X_i(t)
\end{equation}
\begin{equation}
     \mathcal{Z}(\bar{c}^u_i(t) , C^u_i(t)\text{Diag}(\alpha_i^{max,u}(t))) \subseteq U_i(t)
\end{equation}
\end{subequations}

Constraints (\ref{eq:d}) and (\ref{eq:e}) and the objective function are for computing the directed Hausdorff distances and we borrow them from \cite{sadraddini2019linear}. 
The following theorem is the main result of this section.
\begin{theorem}[Convexity of parametric potential function]
Using our paramterization and linear encoding with containment, the parametric potential function is convex. And the set of correct parameters (level set at zero) is also a convex set.
\end{theorem}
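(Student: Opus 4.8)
The plan is to establish convexity in two stages: first argue that the feasible region of the optimization problem \eqref{LTV_guarantees} defining each $\mathcal{V}_i(\alpha)$ is jointly convex in $(\alpha, \mathrm{x}^i, T^i, \mathrm{u}^i, M^i, d^x_t, d^u_t)$, and second that the objective is linear, so each $\mathcal{V}_i(\alpha)$ is the partial minimization of a convex program over the auxiliary variables, hence convex in $\alpha$. Summing over $i$ preserves convexity, giving convexity of $\mathcal{V}(\alpha)$. The zero level set then follows because the sublevel sets of a convex function are convex, and $\{\alpha : \mathcal{V}(\alpha) \le 0\}$ equals $\{\alpha : \mathcal{V}(\alpha) = 0\}$ since each directed Hausdorff distance is nonnegative.

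First I would examine each constraint in \eqref{LTV_guarantees} and verify it is linear (or at least affine) in the joint variable set. The viability equalities \eqref{eq:a} and \eqref{eq:b} are linear in $T^i, M^i, \bar{\mathrm{x}}^i, \bar{\mathrm{u}}^i$ once the parametric augmented disturbance is substituted. The critical observation is that the parametric assumption \eqref{eq:c} makes $D_i^{aug}(t)$ depend on $\alpha$: since $\mathcal{X}_j(t,\alpha^x_j(t)) = \mathcal{Z}(\bar{c}^x_j(t), C^x_j(t)\,\text{Diag}(\alpha^x_j(t)))$, the generator matrix of $D_i^{aug}(t)$ is obtained by horizontally stacking terms of the form $A_{ij}(t)C^x_j(t)\,\text{Diag}(\alpha^x_j(t))$, which are \emph{linear} in $\alpha$. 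Thus constraint \eqref{eq:a}, read as defining $T^i_{t+1}$, couples the $\alpha$-dependent generators linearly with $T^i_t, M^i_t$ — this coupling is affine in the joint variables.

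The containment constraints \eqref{eq:d} and \eqref{eq:e} are where the key lemmas enter. Here I would invoke the Weighted Zonotope Containment lemma: the relation $\mathcal{Z}(\bar{\mathrm{x}}^i_t, T^i_t) \subseteq \mathcal{X}_i(t,\alpha^x_i(t)) \oplus \mathcal{Z}(0, d^x_t I_{n_i})$ is encoded via auxiliary matrices $\Gamma, \gamma$ satisfying $G_1 = G_2\Gamma$, $c_2 - c_1 = G_2\gamma$, together with the modified norm bound $\|[\Gamma,\gamma]\|_\infty \le \alpha$. The equality encodings are linear in all variables, and the inequality $\|[\Gamma,\gamma]\|_\infty \le \alpha$ is a convex constraint that is jointly convex in $(\Gamma, \gamma, \alpha)$ since it is equivalent to finitely many affine inequalities $\pm[\Gamma,\gamma]_{r,c} \le \alpha_r$. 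The box constraints \eqref{eq:g}, \eqref{eq:h} and the final containment \eqref{eq:f} are likewise affine/convex. Therefore the entire feasible set is a convex polyhedron in the lifted variable space.

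The main obstacle, and the step demanding the most care, is justifying that partial minimization preserves convexity: $\mathcal{V}_i(\alpha) = \inf_{\text{aux}} f(\alpha, \text{aux})$ subject to the convex constraints, where $f = \sum_t d^x_t + d^u_t$ is linear. The standard result is that if $g(\alpha, y)$ is jointly convex and we minimize over $y$ ranging over a convex set (with the constraint set convex in $(\alpha, y)$ jointly), then the value function $\alpha \mapsto \inf_y g(\alpha,y)$ is convex; I would cite this partial-minimization theorem from convex analysis \cite{rockafellar2015convex}. The subtlety is ensuring the joint constraint region projects correctly — that for each $\alpha$ in the effective domain the infimum is well-defined and the lifted feasible set is genuinely jointly convex, which follows once every constraint above is confirmed affine or convex in the joint variables. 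Granting this, each $\mathcal{V}_i$ is convex, the sum $\mathcal{V}(\alpha) = \sum_i \mathcal{V}_i(\alpha)$ is convex, and the correct-parameter set $\{\alpha : \mathcal{V}(\alpha) = 0\}$, being a sublevel set of a convex nonnegative function, is convex, completing the proof.
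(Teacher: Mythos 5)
Your proposal is correct and takes essentially the same approach as the paper's (far terser) proof: each $\mathcal{V}_i(\alpha)$ is the optimal value of a linear program in which $\alpha$ enters affinely in the constraints, hence convex as a partial minimization of a jointly convex problem; the sum of convex functions is convex; and the set of correct parameters is the zero sublevel set (which coincides with the zero level set by nonnegativity of the directed Hausdorff distances), hence convex. One harmless imprecision: in the weighted containment constraint $\|[\Gamma,\gamma]\|_\infty \leq \alpha$ the norm is the induced infinity norm, so each constraint bounds a row-wise sum of absolute values by the corresponding entry of $\alpha$ rather than being the entrywise bound you describe, but it is still polyhedral and jointly convex in $(\Gamma,\gamma,\alpha)$, so your argument goes through unchanged.
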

\begin{proof}
As shown in (\ref{LTV_guarantees}), each $\mathcal{V}_i(\alpha)$ is formulated in a linear program, which implies that each $\mathcal{V}_i(\alpha)$ is a convex function and because the summation of convex functions remains convex, $\mathcal{V}(\alpha)$ is also convex. Moreover, we know that the level set of a convex function is a convex set, so the correct parameters compose a convex set.
\end{proof}


\section{Compositional Synthesis and Computations}
\label{sec_gradient}
Two methods are offered to address problem (\ref{problem_viable}) and (\ref{Problem_RCI}) with the help of parametric assume-guarantee contracts. The first one offers a single centralized optimization to find decentralized viable sets. The second method does the same task, but with a compositional approach.

\subsection{Single Convex Program} \label{centralized_approach}
This is a centralized synthesis method that gives the contracts and a set of decentralized viable sets for each subsystem at the same time. Having each subsystem in the form (\ref{idea}), where $d_i^{aug}(t)$ belongs to
\begin{equation}
    d_i^{aug}(t) \in W_i(\alpha , t),
\end{equation}
with the help of Theorem \ref{Thrm_viable set_single} and (\ref{corectness_prime}), the following centralized optimization is offered for a given $k \in \mathbb{N}$ (in practice, start from an arbitrary initial $k$ and increase it until feasibility is achieved) :

\begin{subequations} \label{LTV_optmz}
\begin{align}
    & \Omega , \Theta = \underset{\mathrm{x}^i_t,T^i_t,\mathrm{u}^i_t,M^i_t,\alpha}{\text{argmin}}
    \begin{aligned}[t]
       & \sum_{t \in \mathbb{N}_{h-1}}\sum_{i\in \mathcal{I}}{\text{sum}(\alpha_i^x(t))}
    \end{aligned} \notag \\
    &\text{subject to} \notag \\
    & [A_{ii}(t)T^i_t+ B_{ii}(t) M^i_t , D_i^{aug}(t) ] =  [T^i_{t+1}] , \hspace{2 mm} \forall t\in \mathbb{N}_{h-1} , \forall i \in \mathcal{I}  \label{eq2:a} \\
    & A_{ii}(t)\bar{\mathrm{x}}^i_t + B_{ii}(t) \bar{\mathrm{u}}^i_t + \bar{d}_i^{aug}(t) = \bar{\mathrm{x}}^i_{t+1} , \hspace{4 mm} \forall t\in \mathbb{N}_{h-1} ,  \forall i \in \mathcal{I} \label{eq2:b}\\
    & \mathcal{Z}(\bar{\mathrm{x}}_t^i,T^i_t) \subseteq X_i(t) , \hspace{4 mm} \forall t\in \mathbb{N}_{h} , \forall i \in \mathcal{I}        \label{eq2:c}\\
    &  \mathcal{Z}(\bar{\mathrm{u}}^i_t,M^i_t) \subseteq U_i(t), \hspace{4 mm} \forall t\in \mathbb{N}_{h-1} , \forall i \in \mathcal{I}  \label{eq2:d}\\
    & \mathcal{Z}( \bar{d}_i^{aug}(t) , D_i^{aug}(t))  =  \bigoplus_{j \ne i}[A_{ij}(t)\mathcal{X}_j(t,\alpha^x_j(t)) \label{eq2:e}\\ & \hspace{1.5 cm} \oplus B_{ij}(t)\mathcal{U}_j(t,\alpha^u_j(t))] \oplus D_i(t),  \forall t \in \mathbb{N}_{h-1} , \forall i \in \mathcal{I} \notag \\
    & \mathcal{Z}(\bar{\mathrm{x}}_t^i,T^i_t) \subseteq \mathcal{X}_i(t,\alpha^x_i(t)), \hspace{4 mm}\forall t\in \mathbb{N}_{h-1} , \forall i \in \mathcal{I}     \label{eq2:f}\\
    & \mathcal{Z}(\bar{\mathrm{u}}^i_t,M^i_t) \subseteq \mathcal{U}_i(t,\alpha^u_i(t)), \hspace{4 mm} \forall t\in \mathbb{N}_{h-1} , \forall i \in \mathcal{I}   \label{eq2:g}\\
    & \alpha^x_i(t),\alpha^u_i(t) \geq 0,  \hspace{4 mm} \forall t\in \mathbb{N}_{h-1} ,  \forall i \in \mathcal{I} \label{eq2:h}.
\end{align}
\end{subequations}
Where $\Omega = \{ \Omega_i(t)| \Omega_i(t) = \mathcal{Z}(\mathrm{x}^i_t,T^i_t) ,\forall t\in \mathbb{N}_{h} , \forall i \in \mathcal{I} \}$ and $\Theta = \{ \Theta_i(t)| \Theta_i(t) = \mathcal{Z}(\mathrm{u}^i_t,M^i_t) ,\forall t\in \mathbb{N}_{h-1} , \forall i \in \mathcal{I} \}$. The constraints (\ref{eq2:a}), (\ref{eq2:b}), (\ref{eq2:c}), and (\ref{eq2:d}) imply the viable sets constraints from Theorem \ref{Thrm_viable set_single}. The constraints (\ref{eq2:f}) and (\ref{eq2:g}) are coming from the correctness criterion (\ref{corectness_prime}). The arbitrary sets $\mathcal{X}_i(t)$ and $\mathcal{U}_i(t)$ can be determined by a prior knowledge of the system (e.g. can be the viable set of the subsystem neglecting the coupling effects) or they can simply be the whole admissible state space $X_i(t)$ and control input $U_i(t)$, respectively. In that case, there is no need for the constraints (\ref{eq2:c}) and (\ref{eq2:d}), instead we need to add $\mathcal{Z}(\bar{\mathrm{x}}^i_h,T^i_h) \subseteq X_i(h) , \forall i \in \mathcal{I}$  to the constraints and the constraint (\ref{eq2:h}) changes to
\begin{equation}
  0 \leq  \alpha^x_i(t),\alpha^u_i(t) \leq 1,  \hspace{4 mm} \forall t\in \mathbb{N}_{h-1} ,  \forall i \in \mathcal{I}.
\end{equation}
The objective function is the summation of all the elements of $\alpha^x_i$ over all subsystems (and time steps), which is a heuristic method for minimizing the volume of the viable sets. Note that this method is sound, because the correctness criterion is enforced in the process by using zonotope containment constraints. As a result, any output of the approach is correct-by-construction.

\subsection{Compositional Approach} \label{compos_method}

The centralized method in the previous section is not scalable to large-scale systems, although the implementation (on line mode) is decentralized, but solving one large linear program is still problematic. Two main causes of this problem are: (i) the large number of variables and constraints in a single optimization problem (ii) the order of zonotope $\mathcal{Z}( \bar{d}_i^{aug}(t) , D_i^{aug}(t))$ is very large when the number of subsystems is large. To address (i), we propose a compositional method that, at each optimization, just deals with one subsystem. Also for solving (ii), we use zonotope order reduction methods to over-approximate the disturbance set with a zonotope with smaller order \cite{Kopetzki2018}, \cite{Yang2018}. For this paper, we have used \textit{boxing method} \cite{Kiihn1998}, \cite{C.Combastel} which is a well-known zonotope order reduction method. Using \textit{boxing method} with the desired order $o_i(t)$, the new disturbance set would be:
\begin{equation}
    \mathcal{Z}( \bar{d}_i^{red}(t) , D_i^{red}(t)) \xleftarrow{\text{order reduction}} \mathcal{Z}( \bar{d}_i^{aug}(t) , D_i^{aug}(t)),
\end{equation}
where $\bar{d}_i^{red}(t) \in \mathbb{R}^{n_i} $ and $D_i^{red}(t) \in \mathbb{R}^{n_i \times (o_i(t)n_i)}$. One of the main contribution of this paper is proposing a convex potential function, where each subsystem has its own cost and the potential function is the summation of all costs, which enables us to solve the following optimization problem in a distributed manner:

\begin{equation} \label{potential_optmz}
\mathcal{V}^* = \underset{\alpha}{\text{min   }} \sum_{i \in \mathcal{I}}\mathcal{V}_i(\alpha) 
\end{equation}
If $\mathcal{V}^* =0$, then the contracts match each other and it is allowable to use $\alpha^*$ and find the viable sets. Otherwise ($\mathcal{V}^* > 0$), the method has failed to find a correct set of contracts and we need to increase $k$ or $o_i(t)$ and try again. Note that each $\mathcal{V}_i(\alpha)$ can be computed from (\ref{LTV_guarantees}) while the constraints (\ref{eq:g}) and (\ref{eq:h}) are removed for a fixed $k \in \mathbb{N}$. The optimization problem (\ref{potential_optmz}) is a convex optimization problem, which can be solved by gradient descent:  
\begin{equation} \label{gradient_descent}
    \alpha \leftarrow \alpha - \delta  \nabla \mathcal{V}(\alpha),
\end{equation}
where $\delta$ is the step size. The gradient of $\mathcal{V}(\alpha)$ is equal to:
\begin{equation}
    \nabla \mathcal{V}(\alpha) = \sum_{i \in \mathcal{I}} \nabla \mathcal{V}_i(\alpha).
\end{equation}
It is well-known that the dual variable of a constraint shows the gradient with respect to the right hand side of that constraint \cite{bertsimas1997introduction}. The optimization problem (\ref{LTV_guarantees}) are formulated in a way that all the elements of $\alpha$ locates in the right hand side of the constraints, so that, the dual variable of the corresponding constraint will give the gradient with respect to the right hand side of that constraint. Then, by using the chain rule, we can compute $\nabla \mathcal{V}_i(\alpha)$ with respect to the elements of $\alpha$.

It is important to note that a set of contract can satisfy correctness criterion (\ref{corectness_prime}), but be not inside the accessible state space or control input. As explained before, That is the reason for existence of upper bounds in the constraints (\ref{eq:g}) and (\ref{eq:h}), but the $\alpha$ can jump outside of its feasible region because of discrete jumps in the process of gradient descent method in (\ref{gradient_descent}). To fix this problem, whenever it happens we need to map the current $\alpha$ to its feasible region.

\section{Case Studies}
\label{sec_examples}
We present three case studies as follows:
\begin{itemize}
    \item 
The first one is an illustrative example to show how the compositional method works and how the convex potential function looks like for a simple example. 
    \item
The second case study shows decentralized finite-time viable sets for an LTV system using the single convex program.
    \item
The third one Benchmarks the scalability of the proposed compositional method with respect to three existing methods.
\end{itemize}
We used a MacBook Pro 2.6 GHz Intel Core i7 and Gurobi optimization solver \cite{gurobi} to do the computations.
\subsection*{Case Study 1}
Consider a time-invariant system in the form (\ref{subsystems_LTV}). The aggregated matrix $A$ which contains all the $A_{ij}$s is as follows:

\begin{figure}[t]
  \centering
  \includegraphics[scale=0.45]{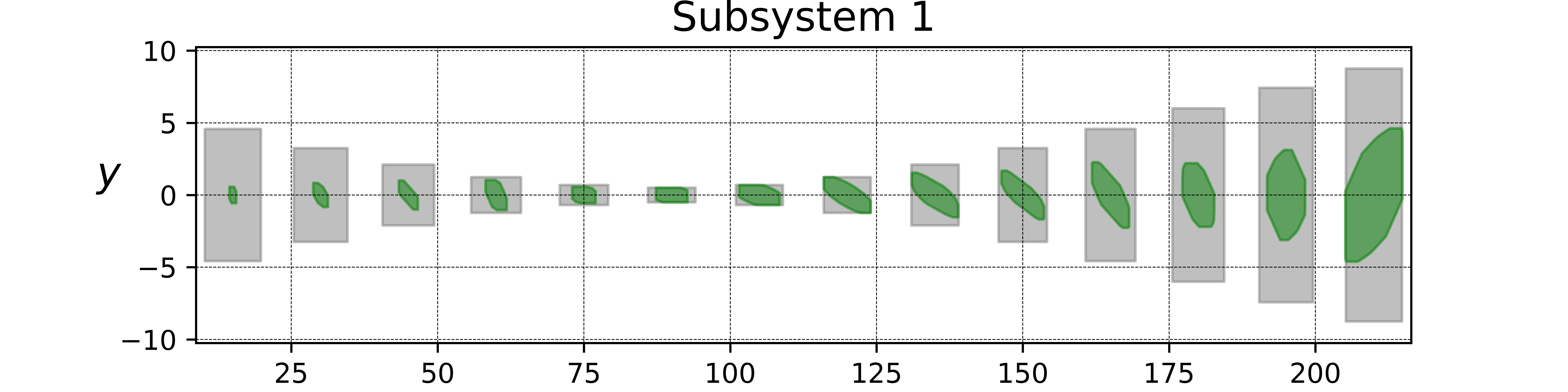}
  \includegraphics[scale=0.45]{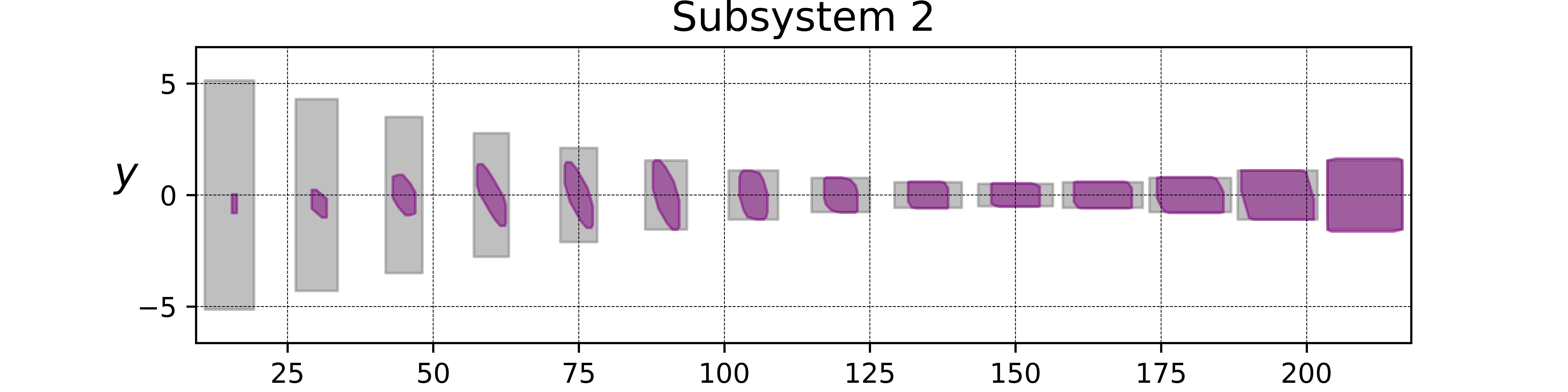}
  \includegraphics[scale=0.45]{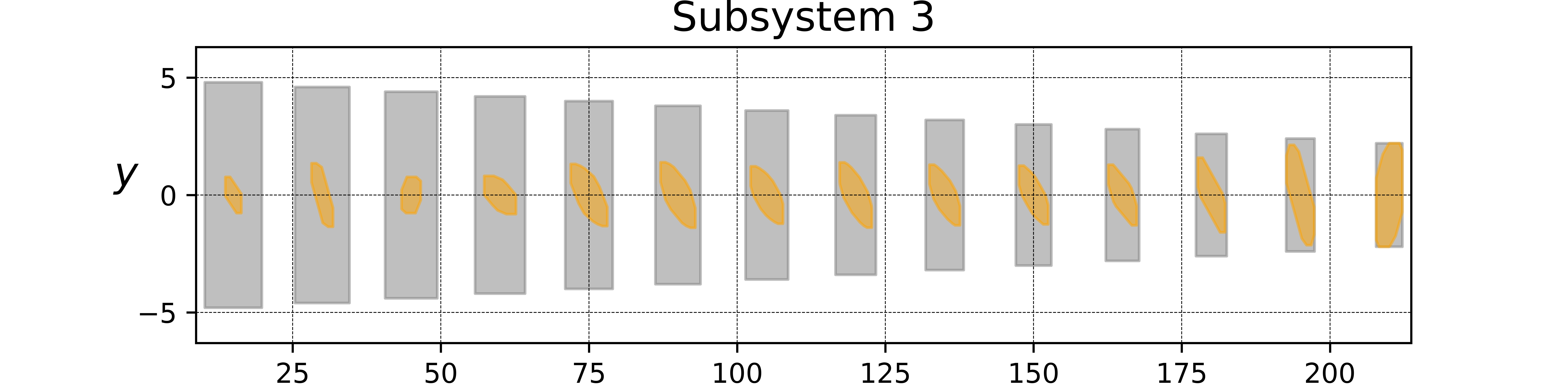}
  \caption{It shows decentralized viable sets for each subsystem. To prevent overlapping of the viable sets, each viable set is moved 15 units right with respect to the previous viable set. Each gray area shows $X_i(t)$ (the bound over the state) and all viable sets are correctly inside their corresponding state bounds.}
  \label{Case_study_2}
\end{figure}

\begin{equation*}
A = \left[\begin{array}{c c| c c |c c}
	1 & 1.1 & 0.1 & 0.01 & 0.8 & 0.1 \\
0 & 1  & 0.1 & 0.01 & 0.8 & 0.1 \\
\hline
0.1 & 0.01 & 1 & 1.1 & 0.4 & 0.01 \\
0.1 & 0.01 & 0 & 1 & 0.4 & 0.01 \\
\hline
0.02 & 0.0001 & 0.01 & 0.0001 & 1 & 1.1 \\
0.02 & 0.0001 & 0.01 & 0.0001 & 1 & 1 
\end{array}\right],
\end{equation*} 
where $n_i =2$ for $i \in \{1,2,3 \}$ and $A_{ij}$ is a square matrix inside $A$, such that $i$ is the number of row and $j$ is the number of column of the $2 \times 2$ square matrix from top left), which means that there are three coupled subsystems and the couplings are just on the states. And 

\begin{equation*}
     B_{ii} = \begin{bmatrix} 
0 \\
0.1
\end{bmatrix},
G^x_i =      \begin{bmatrix} 
1 & 0 \\
0 & 1
\end{bmatrix},
\end{equation*}
and the origin is the center of all zonotops. The goal is to find infinite-time decentralized viable sets for each subsystem (problem \ref{Problem_RCI}). We used the proposed compositional method in section \ref{compos_method}. Note that because the problem is time-invariant, we have to drop $t$ in optimization (\ref{LTV_guarantees}) and replace the constraints that come from Theorem \ref{Thrm_viable set_single} by the constraints in Theorem \ref{thrm_single_RCI}. The dimension of $\alpha^x=  [\alpha^x_1,\alpha^x_2,\alpha^x_3]$ is $6$ (two for each subsystem) and the first and second element of each $\alpha^x_i$ is shown by $\alpha^x_i[1]$ and $\alpha^x_i[2]$, respectively. Since the subsystems are not coupled by their control input, there is no need to define $\alpha^u_i$. The results are shown in the Fig. \ref{figurelabe2}, where the top figure shows the projection of the zero level set of the parametric potential function in $\alpha^x_1[1]$ and $\alpha^x_2[1]$ plane. The incorrect area of parameters is shown in red and the compositionally correct region is shaded in green. The trajectory shows the updates of $\alpha^x$ derived from gradient descent (\ref{gradient_descent}), which starts from an initial $\alpha^x$ and ends in area of correct parameters and the arrows show the direction of gradients of the parametric potential function for each subsystem.

\subsection*{Case Study 2}
In this example, we demonstrated decentralized finite-time viable sets for three coupled LTV systems in the form (\ref{subsystems_LTV}) with the following characteristics:
\begin{multline*}
    A = \left[\begin{array}{c c| c c |c c}
1  & 1.1  & 0.002 & 0.002 & 0 & 0 \\
0 & 1 & 0.002 & 0.002 & 0 & 0 \\
\hline
0.002 & 0.002 & 1 & 1.1 & 0.002 & 0.002 \\
0.002 & 0.002 & 0 & 1 & 0.002 & 0.002 \\
\hline
0 & 0 & 0.002 & 0.002 & 1 & 1.1 \\
0 & 0 & 0.002 & 0.002 & 0 & 1
\end{array}\right],\\
B_{ii}(t) = \begin{bmatrix} 
0\\
0.1 
\end{bmatrix}
,   B_{ij}(t) = \begin{bmatrix}
0 \\
0
\end{bmatrix},
   D_i(t) = \mathcal{Z}(0 , \begin{bmatrix} 
0.4 & 0 \\
0 & 0.4 
\end{bmatrix}) ,\\
   U_i(t) = \mathcal{Z}(0 ,[10]), 
   X_1(t) = \mathcal{Z}(0 , \begin{bmatrix}
5 -  \dfrac{\pi t }{15} & 0 \\
0 & 6 - \dfrac{11 \pi t}{24}
\end{bmatrix}), \\
   X_2(t) = \mathcal{Z}(0 , \begin{bmatrix}
5 - 2 * \sin{ \dfrac{\pi t}{8}} & 0 \\
0 & 6 - 5.5 * \sin{ \dfrac{\pi t}{20}}
\end{bmatrix}),\\
X_3(t) = \mathcal{Z}(0 , \begin{bmatrix}
5 - \dfrac{t}{5} &  0 \\ 
0. &  5 - \dfrac{t}{5}
\end{bmatrix}),
\end{multline*}
where $t \in \mathbb{N}_{15}$, $n_i =2$, $m_i = 1$, for $i \in \{1,2,3\}$. $A$ is the aggregated matrix of $A_{ij}$s (like case study 1). The resulted decentralized viable sets are shown in the Fig. \ref{Case_study_2}. Note that, for each subsystem, the viable set of the first step is a point since the proposed objective function minimizes the size of the viable sets.

\subsection*{Case Study 3}
\label{case_table}
\begin{table*}[t]
\caption{Case Study \ref{case_table}: Synthesis Times in Seconds}
\resizebox{0.9\textwidth}{!}{
\begin{tabular}{|c|c|c|c|c|}
\hline
\begin{tabular}[c]{@{}c@{}}total dimension\\ of the state space\\ (= $2 \times$ number of subsystems)\end{tabular} & $\lambda$ (Coupling Parameter)       & \begin{tabular}[c]{@{}c@{}}Centralized Optimization \\ of \\ Centralized Controllers \end{tabular} & \begin{tabular}[c]{@{}c@{}}Centralized Optimization \\ of \\ Decentralized Controllers\end{tabular} & \begin{tabular}[c]{@{}c@{}} Compositional Synthesis \\ of \\ Decentralized Controllers  \end{tabular} \\ \hline
10                                                                                                      & 1       & 1.11                                                                                      & 0.87                                                                             & 0.011                                                                              \\ \hline
20                                                                                                      & 0.1     & 14.58                                                                                     & 6.75                                                                             & 0.023                                                                              \\ \hline
40                                                                                                      & 0.1     & 211.72                                                                                    & 54.95                                                                            & 0.048                                                                              \\ \hline
60                                                                                                      & 0.1     & 1046.69                                                                                   & 192.10                                                                           & 0.64                                                                               \\ \hline
80                                                                                                      & 0.1     & time out                                                                                  & 472.49                                                                           & 1.28                                                                               \\ \hline
100                                                                                                     & 0.1     & time out                                                                                  & 961.23                                                                           & 3.60                                                                               \\ \hline
200                                                                                                     & 0.05    & time out                                                                                  & time out                                                                         & 7.49                                                                               \\ \hline
400                                                                                                     & 0.05    & time out                                                                                  & time out                                                                         & 56.12                                                                              \\ \hline
500                                                                                                     & 0.05    & time out                                                                                  & time out                                                                         & 5.38                                                                               \\ \hline
1000                                                                                                    & 0.01    & time out                                                                                  & time out                                                                         & 26.63                                                                              \\ \hline
2000                                                                                                    & 0.001   & time out                                                                                  & time out                                                                         & 11.94                                                                              \\ \hline
4000                                                                                                    & 0.001   & time out                                                                                  & time out                                                                         & 38.83                                                                              \\ \hline
10000                                                                                                   & 0.0001  & time out                                                                                  & time out                                                                         & 90.31                                                                              \\ \hline
20000                                                                                                   & 0.00001 & time out                                                                                  & time out                                                                         & 217.27                                                                             \\ \hline
\end{tabular} \label{tabel1}
}
\end{table*}
This example is adopted from \cite{Motee2008}, where the authors generated a random network of coupled linear subsystems. They initially scatter random points in a square field with each side 100 units and assign each point to a subsystem. If the Euclidean distance between any two points is less than $10$ units, they are considered as neighbors. The dynamics for each subsystem is:
\begin{equation} \label{eq:ex2}
    x_i^+ = A_{ii}x_i(t) + B_{ii}u_i(t) + d_i(t)+ \sum_{j \neq i}{A_{ij}x_j(t)},
\end{equation}
where $A_{ii}$ is $\begin{bmatrix} 
    1 & 1.2\\
    0 & 1
    \end{bmatrix}$ and $B_{ii}$ is $\begin{bmatrix} 
    0\\
    0.2
    \end{bmatrix} $. If subsystems $i$ and $j$ are not neighbors, $A_{ij}=0 $. Otherwise: 
\begin{equation} \label{eq: lambda}
    A_{ij} = \dfrac{\lambda}{1 + \dist(i,j)}
    \begin{bmatrix} 
    1 & 1\\
    1 & 1
    \end{bmatrix},
\end{equation}
where $\lambda$ is a constant and $\dist(i,j)$ is Euclidean distance between points $i$ and $j$.
The following constraints are imposed on (\ref{eq:ex2}):
\begin{multline}
    x_i(t) \in \mathcal{Z}(0 , \begin{bmatrix} 
10 & 0 & 10\\
0 & 10 & -10 
\end{bmatrix}), u_i(t) \in \mathcal{Z}(0,10I_1),\\ d_i(t) \in \mathcal{Z}(0, 0.2I_2).
\end{multline}
The problem is finding infinite-time contracts for each subsystem. We solve it by three different methods and report the execution times for different sizes of the total state space dimension in Table~\ref{tabel1}. The first method (corresponds to the third column in the table) is the conventional centralized method, which comes from Theorem \ref{thrm_single_RCI} and results in centralized viable sets. As expected before, it could not be applied to large-scale systems. The second approach (corresponds to the fourth column) is our proposed centralized method in section \ref{centralized_approach} with some adjustments for infinite-time contracts. It did a better job than the first one because it has less dense controllers. The third method (corresponding to the last column) is our proposed compositional method which shows great scalability and helps to solve up to 20,000 dim. Note that the reported times for the compositional method are the aggregated time for just solving the optimization problems, excluding the time for building the optimization problem, which heavily depends on the programming interface. Additionally, for the compositional method, all the orders of reduced zonotopes ($o_i(t)$) are fixed to 1. Also, we need to use Remark \ref{rmrk_RCI} to still have linear programs. When the number of subsystems increases, the coupling effects get larger and it may lead to infeasibility. For the sake of getting feasibility all the time, $\lambda$ decreases as the number of subsystems increases.

\section{Conclusion and Future works}
We identified a convex paramterization of assume-guarantee contracts that facilitated  compositional control synthesis of decentralized controllers for large-scale linear systems. The method scales well to very large problems. 

Future work will focus on identifying richer classes of parameterization, and extension to nonlinear systems.

\section{Acknowledgment}
This work was partially supported at Boston University by the NSF under grant IIS-1723995.


\newpage

\bibliographystyle{ACM-Reference-Format}
\bibliography{references_HSCC}


\begin{thebibliography}{37}


\ifx \showCODEN    \undefined \def \showCODEN     #1{\unskip}     \fi
\ifx \showDOI      \undefined \def \showDOI       #1{#1}\fi
\ifx \showISBNx    \undefined \def \showISBNx     #1{\unskip}     \fi
\ifx \showISBNxiii \undefined \def \showISBNxiii  #1{\unskip}     \fi
\ifx \showISSN     \undefined \def \showISSN      #1{\unskip}     \fi
\ifx \showLCCN     \undefined \def \showLCCN      #1{\unskip}     \fi
\ifx \shownote     \undefined \def \shownote      #1{#1}          \fi
\ifx \showarticletitle \undefined \def \showarticletitle #1{#1}   \fi
\ifx \showURL      \undefined \def \showURL       {\relax}        \fi
\providecommand\bibfield[2]{#2}
\providecommand\bibinfo[2]{#2}
\providecommand\natexlab[1]{#1}
\providecommand\showeprint[2][]{arXiv:#2}

\bibitem[\protect\citeauthoryear{Baier and Katoen}{Baier and Katoen}{2008}]%
        {baier2008principles}
\bibfield{author}{\bibinfo{person}{Christel Baier} {and}
  \bibinfo{person}{Joost-Pieter Katoen}.} \bibinfo{year}{2008}\natexlab{}.
\newblock \bibinfo{booktitle}{\emph{{Principles of model checking}}}.
  Vol.~\bibinfo{volume}{26202649}.
\newblock \bibinfo{publisher}{MIT press Cambridge}.
\newblock


\bibitem[\protect\citeauthoryear{Batt, Yordanov, Weiss, and Belta}{Batt
  et~al\mbox{.}}{2007}]%
        {batt2007robustness}
\bibfield{author}{\bibinfo{person}{Gr{\'e}gory Batt}, \bibinfo{person}{Boyan
  Yordanov}, \bibinfo{person}{Ron Weiss}, {and} \bibinfo{person}{Calin Belta}.}
  \bibinfo{year}{2007}\natexlab{}.
\newblock \showarticletitle{Robustness analysis and tuning of synthetic gene
  networks}.
\newblock \bibinfo{journal}{\emph{Bioinformatics}} \bibinfo{volume}{23},
  \bibinfo{number}{18} (\bibinfo{year}{2007}), \bibinfo{pages}{2415--2422}.
\newblock


\bibitem[\protect\citeauthoryear{Bertsimas and Tsitsiklis}{Bertsimas and
  Tsitsiklis}{1997}]%
        {bertsimas1997introduction}
\bibfield{author}{\bibinfo{person}{Dimitris Bertsimas} {and}
  \bibinfo{person}{John~N Tsitsiklis}.} \bibinfo{year}{1997}\natexlab{}.
\newblock \bibinfo{booktitle}{\emph{Introduction to linear optimization}}.
  Vol.~\bibinfo{volume}{6}.
\newblock \bibinfo{publisher}{Athena Scientific Belmont, MA}.
\newblock


\bibitem[\protect\citeauthoryear{Bowen}{Bowen}{1993}]%
        {bowen1993formal}
\bibfield{author}{\bibinfo{person}{Jonathan Bowen}.}
  \bibinfo{year}{1993}\natexlab{}.
\newblock \showarticletitle{Formal methods in safety-critical standards}. In
  \bibinfo{booktitle}{\emph{Proceedings 1993 Software Engineering Standards
  Symposium}}. IEEE, \bibinfo{pages}{168--177}.
\newblock


\bibitem[\protect\citeauthoryear{{C. Combastel}}{{C. Combastel}}{2003}]%
        {C.Combastel}
\bibfield{author}{\bibinfo{person}{{C. Combastel}}.}
  \bibinfo{year}{2003}\natexlab{}.
\newblock \showarticletitle{A state bounding observer based on zonotopes. In
  Proc. of the European Control Conference}.
\newblock  (\bibinfo{year}{2003}), \bibinfo{pages}{2589–2594}.
\newblock


\bibitem[\protect\citeauthoryear{Clarke, Grumberg, and Long}{Clarke
  et~al\mbox{.}}{1996}]%
        {clarke1999model}
\bibfield{author}{\bibinfo{person}{Edmund~M Clarke}, \bibinfo{person}{Orna
  Grumberg}, {and} \bibinfo{person}{David~E Long}.}
  \bibinfo{year}{1996}\natexlab{}.
\newblock \bibinfo{booktitle}{\emph{{Model checking}}}.
  Vol.~\bibinfo{volume}{52}.
\newblock \bibinfo{publisher}{MIT press}. 305--349 pages.
\newblock
\showISBNx{3-540-60947-4}
\showISSN{00010782}


\bibitem[\protect\citeauthoryear{Clarke, Krogh, Platzer, and Rajkumar}{Clarke
  et~al\mbox{.}}{2008}]%
        {clarke2008analysis}
\bibfield{author}{\bibinfo{person}{Edmund~M Clarke}, \bibinfo{person}{Bruce
  Krogh}, \bibinfo{person}{Andre Platzer}, {and} \bibinfo{person}{Raj
  Rajkumar}.} \bibinfo{year}{2008}\natexlab{}.
\newblock \showarticletitle{Analysis and verification challenges for
  cyber-physical transportation systems}. In \bibinfo{booktitle}{\emph{National
  Workshop for Research on High-Confidence Transportation Cyber-Physical
  Systems: Automotive, Aviation and Rail}}.
\newblock


\bibitem[\protect\citeauthoryear{Coogan and Arcak}{Coogan and Arcak}{2015}]%
        {coogan2015compartmental}
\bibfield{author}{\bibinfo{person}{Samuel Coogan} {and} \bibinfo{person}{Murat
  Arcak}.} \bibinfo{year}{2015}\natexlab{}.
\newblock \showarticletitle{A compartmental model for traffic networks and its
  dynamical behavior}.
\newblock \bibinfo{journal}{\emph{IEEE Trans. Automat. Control}}
  \bibinfo{volume}{60}, \bibinfo{number}{10} (\bibinfo{year}{2015}),
  \bibinfo{pages}{2698--2703}.
\newblock


\bibitem[\protect\citeauthoryear{Eqtami and Girard}{Eqtami and Girard}{2019}]%
        {Eqtami2019}
\bibfield{author}{\bibinfo{person}{Alina Eqtami} {and} \bibinfo{person}{Antoine
  Girard}.} \bibinfo{year}{2019}\natexlab{}.
\newblock \showarticletitle{{A quantitative approach on assume-guarantee
  contracts for safety of interconnected systems}}.
\newblock \bibinfo{journal}{\emph{2019 18th European Control Conference, ECC
  2019}} (\bibinfo{year}{2019}), \bibinfo{pages}{536--541}.
\newblock
\showISBNx{9783907144008}


\bibitem[\protect\citeauthoryear{Ghasemi, Sadraddini, and Belta}{Ghasemi
  et~al\mbox{.}}{2019}]%
        {ghasemi2019compositional}
\bibfield{author}{\bibinfo{person}{Kasra Ghasemi}, \bibinfo{person}{Sadra
  Sadraddini}, {and} \bibinfo{person}{Calin Belta}.}
  \bibinfo{year}{2019}\natexlab{}.
\newblock \showarticletitle{Compositional Synthesis of Decentralized Robust
  Set-Invariance Controllers for Large-scale Linear Systems}.
\newblock \bibinfo{journal}{\emph{arXiv preprint arXiv:1909.06425}}
  (\bibinfo{year}{2019}).
\newblock


\bibitem[\protect\citeauthoryear{Giannakopoulou, Pasareanu, and
  Cobleigh}{Giannakopoulou et~al\mbox{.}}{2004}]%
        {giannakopoulou2004assume}
\bibfield{author}{\bibinfo{person}{Dimitra Giannakopoulou},
  \bibinfo{person}{Corina~S Pasareanu}, {and} \bibinfo{person}{Jamieson~M
  Cobleigh}.} \bibinfo{year}{2004}\natexlab{}.
\newblock \showarticletitle{Assume-guarantee verification of source code with
  design-level assumptions}. In \bibinfo{booktitle}{\emph{Proceedings. 26th
  International Conference on Software Engineering}}. IEEE,
  \bibinfo{pages}{211--220}.
\newblock


\bibitem[\protect\citeauthoryear{Gurobi~Optimization}{Gurobi~Optimization}{2016}]%
        {gurobi}
\bibfield{author}{\bibinfo{person}{Inc. Gurobi~Optimization}.}
  \bibinfo{year}{2016}\natexlab{}.
\newblock \bibinfo{title}{Gurobi Optimizer Reference Manual}.
\newblock
\newblock
\urldef\tempurl%
\url{http://www.gurobi.com}
\showURL{%
\tempurl}


\bibitem[\protect\citeauthoryear{Henzinger, Minea, and Prabhu}{Henzinger
  et~al\mbox{.}}{2001}]%
        {henzinger2001assume}
\bibfield{author}{\bibinfo{person}{Thomas~A Henzinger}, \bibinfo{person}{Marius
  Minea}, {and} \bibinfo{person}{Vinayak Prabhu}.}
  \bibinfo{year}{2001}\natexlab{}.
\newblock \showarticletitle{Assume-guarantee reasoning for hierarchical hybrid
  systems}. In \bibinfo{booktitle}{\emph{International Workshop on Hybrid
  Systems: Computation and Control}}. Springer, \bibinfo{pages}{275--290}.
\newblock


\bibitem[\protect\citeauthoryear{Kerber and van~der Schaft}{Kerber and van~der
  Schaft}{2010}]%
        {kerber2010compositional}
\bibfield{author}{\bibinfo{person}{Florian Kerber} {and} \bibinfo{person}{Arjan
  van~der Schaft}.} \bibinfo{year}{2010}\natexlab{}.
\newblock \showarticletitle{Compositional analysis for linear control systems}.
  In \bibinfo{booktitle}{\emph{Proceedings of the 13th ACM international
  conference on Hybrid systems: computation and control}}. ACM,
  \bibinfo{pages}{21--30}.
\newblock


\bibitem[\protect\citeauthoryear{Kiihn}{Kiihn}{1998}]%
        {Kiihn1998}
\bibfield{author}{\bibinfo{person}{W Kiihn}.} \bibinfo{year}{1998}\natexlab{}.
\newblock \bibinfo{booktitle}{\emph{{Rigorously Computed Orbits of Dynamical
  Systems without the Wrapping Effect}}}.
\newblock \bibinfo{type}{{T}echnical {R}eport}. \bibinfo{pages}{47--67} pages.
\newblock


\bibitem[\protect\citeauthoryear{Kim, Arcak, and Seshia}{Kim
  et~al\mbox{.}}{2015}]%
        {kim2015compositional}
\bibfield{author}{\bibinfo{person}{Eric~S Kim}, \bibinfo{person}{Murat Arcak},
  {and} \bibinfo{person}{Sanjit~A Seshia}.} \bibinfo{year}{2015}\natexlab{}.
\newblock \showarticletitle{Compositional controller synthesis for vehicular
  traffic networks}. In \bibinfo{booktitle}{\emph{2015 54th IEEE Conference on
  Decision and Control (CDC)}}. IEEE, \bibinfo{pages}{6165--6171}.
\newblock


\bibitem[\protect\citeauthoryear{Kim, Arcak, and Seshia}{Kim
  et~al\mbox{.}}{2016}]%
        {kim2016directed}
\bibfield{author}{\bibinfo{person}{Eric~S Kim}, \bibinfo{person}{Murat Arcak},
  {and} \bibinfo{person}{Sanjit~A Seshia}.} \bibinfo{year}{2016}\natexlab{}.
\newblock \showarticletitle{Directed specifications and assumption mining for
  monotone dynamical systems}. In \bibinfo{booktitle}{\emph{Proceedings of the
  19th International Conference on Hybrid Systems: Computation and Control}}.
  ACM, \bibinfo{pages}{21--30}.
\newblock


\bibitem[\protect\citeauthoryear{Kim, Arcak, and Seshia}{Kim
  et~al\mbox{.}}{2017a}]%
        {kim2017small}
\bibfield{author}{\bibinfo{person}{Eric~S Kim}, \bibinfo{person}{Murat Arcak},
  {and} \bibinfo{person}{Sanjit~A Seshia}.} \bibinfo{year}{2017}\natexlab{a}.
\newblock \showarticletitle{A small gain theorem for parametric
  assume-guarantee contracts}. In \bibinfo{booktitle}{\emph{Proceedings of the
  20th International Conference on Hybrid Systems: Computation and Control}}.
  ACM, \bibinfo{pages}{207--216}.
\newblock


\bibitem[\protect\citeauthoryear{Kim, Arcak, and Seshia}{Kim
  et~al\mbox{.}}{2017b}]%
        {kim2017symbolic}
\bibfield{author}{\bibinfo{person}{Eric~S Kim}, \bibinfo{person}{Murat Arcak},
  {and} \bibinfo{person}{Sanjit~A Seshia}.} \bibinfo{year}{2017}\natexlab{b}.
\newblock \showarticletitle{Symbolic control design for monotone systems with
  directed specifications}.
\newblock \bibinfo{journal}{\emph{Automatica}}  \bibinfo{volume}{83}
  (\bibinfo{year}{2017}), \bibinfo{pages}{10--19}.
\newblock


\bibitem[\protect\citeauthoryear{Kim, Sadraddini, Belta, Arcak, and Seshia}{Kim
  et~al\mbox{.}}{2017c}]%
        {kim2017dynamic}
\bibfield{author}{\bibinfo{person}{Eric~S Kim}, \bibinfo{person}{Sadra
  Sadraddini}, \bibinfo{person}{Calin Belta}, \bibinfo{person}{Murat Arcak},
  {and} \bibinfo{person}{Sanjit~A Seshia}.} \bibinfo{year}{2017}\natexlab{c}.
\newblock \showarticletitle{Dynamic contracts for distributed temporal logic
  control of traffic networks}. In \bibinfo{booktitle}{\emph{2017 IEEE 56th
  Annual Conference on Decision and Control (CDC)}}. IEEE,
  \bibinfo{pages}{3640--3645}.
\newblock


\bibitem[\protect\citeauthoryear{Kolmanovsky and Gilbert}{Kolmanovsky and
  Gilbert}{1998}]%
        {kolmanovsky1998theory}
\bibfield{author}{\bibinfo{person}{Ilya Kolmanovsky} {and}
  \bibinfo{person}{Elmer~G Gilbert}.} \bibinfo{year}{1998}\natexlab{}.
\newblock \showarticletitle{Theory and computation of disturbance invariant
  sets for discrete-time linear systems}.
\newblock \bibinfo{journal}{\emph{Mathematical problems in engineering}}
  \bibinfo{volume}{4}, \bibinfo{number}{4} (\bibinfo{year}{1998}),
  \bibinfo{pages}{317--367}.
\newblock


\bibitem[\protect\citeauthoryear{Kopetzki, Sch{\"u}rmann, and Althoff}{Kopetzki
  et~al\mbox{.}}{2017}]%
        {kopetzki2017methods}
\bibfield{author}{\bibinfo{person}{Anna-Kathrin Kopetzki},
  \bibinfo{person}{Bastian Sch{\"u}rmann}, {and} \bibinfo{person}{Matthias
  Althoff}.} \bibinfo{year}{2017}\natexlab{}.
\newblock \showarticletitle{Methods for order reduction of zonotopes}. In
  \bibinfo{booktitle}{\emph{2017 IEEE 56th Annual Conference on Decision and
  Control (CDC)}}. IEEE, \bibinfo{pages}{5626--5633}.
\newblock


\bibitem[\protect\citeauthoryear{Kopetzki, Schurmann, and Althoff}{Kopetzki
  et~al\mbox{.}}{2018}]%
        {Kopetzki2018}
\bibfield{author}{\bibinfo{person}{Anna~Kathrin Kopetzki},
  \bibinfo{person}{Bastian Schurmann}, {and} \bibinfo{person}{Matthias
  Althoff}.} \bibinfo{year}{2018}\natexlab{}.
\newblock \showarticletitle{{Methods for order reduction of zonotopes}}.
\newblock \bibinfo{journal}{\emph{2017 IEEE 56th Annual Conference on Decision
  and Control, CDC 2017}} \bibinfo{volume}{2018-Janua}, \bibinfo{number}{Cdc}
  (\bibinfo{year}{2018}), \bibinfo{pages}{5626--5633}.
\newblock
\showISBNx{9781509028733}


\bibitem[\protect\citeauthoryear{Kwiatkowska, Norman, Parker, and
  Qu}{Kwiatkowska et~al\mbox{.}}{2010}]%
        {kwiatkowska2010assume}
\bibfield{author}{\bibinfo{person}{Marta Kwiatkowska}, \bibinfo{person}{Gethin
  Norman}, \bibinfo{person}{David Parker}, {and} \bibinfo{person}{Hongyang
  Qu}.} \bibinfo{year}{2010}\natexlab{}.
\newblock \showarticletitle{Assume-guarantee verification for probabilistic
  systems}. In \bibinfo{booktitle}{\emph{International Conference on Tools and
  Algorithms for the Construction and Analysis of Systems}}. Springer,
  \bibinfo{pages}{23--37}.
\newblock


\bibitem[\protect\citeauthoryear{Li, Dworkin, and Seshia}{Li
  et~al\mbox{.}}{2011}]%
        {li2011mining}
\bibfield{author}{\bibinfo{person}{Wenchao Li}, \bibinfo{person}{Lili Dworkin},
  {and} \bibinfo{person}{Sanjit~A Seshia}.} \bibinfo{year}{2011}\natexlab{}.
\newblock \showarticletitle{Mining assumptions for synthesis}. In
  \bibinfo{booktitle}{\emph{Proceedings of the Ninth ACM/IEEE International
  Conference on Formal Methods and Models for Codesign}}. IEEE Computer
  Society, \bibinfo{pages}{43--50}.
\newblock


\bibitem[\protect\citeauthoryear{Motee and Jadbabaie}{Motee and
  Jadbabaie}{2008}]%
        {Motee2008}
\bibfield{author}{\bibinfo{person}{Nader Motee} {and} \bibinfo{person}{Ali
  Jadbabaie}.} \bibinfo{year}{2008}\natexlab{}.
\newblock \showarticletitle{{Optimal Control of Spatially Distributed
  Systems}}.
\newblock  \bibinfo{volume}{53}, \bibinfo{number}{7} (\bibinfo{year}{2008}),
  \bibinfo{pages}{1616--1629}.
\newblock


\bibitem[\protect\citeauthoryear{Nielsen, Der, Shin, Vaidyanathan, Paralanov,
  Strychalski, Ross, Densmore, and Voigt}{Nielsen et~al\mbox{.}}{2016}]%
        {nielsen2016genetic}
\bibfield{author}{\bibinfo{person}{Alec~AK Nielsen}, \bibinfo{person}{Bryan~S
  Der}, \bibinfo{person}{Jonghyeon Shin}, \bibinfo{person}{Prashant
  Vaidyanathan}, \bibinfo{person}{Vanya Paralanov},
  \bibinfo{person}{Elizabeth~A Strychalski}, \bibinfo{person}{David Ross},
  \bibinfo{person}{Douglas Densmore}, {and} \bibinfo{person}{Christopher~A
  Voigt}.} \bibinfo{year}{2016}\natexlab{}.
\newblock \showarticletitle{Genetic circuit design automation}.
\newblock \bibinfo{journal}{\emph{Science}} \bibinfo{volume}{352},
  \bibinfo{number}{6281} (\bibinfo{year}{2016}), \bibinfo{pages}{aac7341}.
\newblock


\bibitem[\protect\citeauthoryear{Nilsson and Ozay}{Nilsson and Ozay}{2016}]%
        {Nilsson2016seperable}
\bibfield{author}{\bibinfo{person}{P. Nilsson} {and} \bibinfo{person}{N.
  Ozay}.} \bibinfo{year}{2016}\natexlab{}.
\newblock \showarticletitle{Synthesis of separable controlled invariant sets
  for modular local control design}. In \bibinfo{booktitle}{\emph{2016 American
  Control Conference (ACC)}}. \bibinfo{pages}{5656--5663}.
\newblock


\bibitem[\protect\citeauthoryear{Nuzzo, Li, Sangiovanni-Vincentelli, Xi, and
  Li}{Nuzzo et~al\mbox{.}}{2019}]%
        {nuzzo2019stochastic}
\bibfield{author}{\bibinfo{person}{Pierluigi Nuzzo}, \bibinfo{person}{Jiwei
  Li}, \bibinfo{person}{Alberto~L Sangiovanni-Vincentelli},
  \bibinfo{person}{Yugeng Xi}, {and} \bibinfo{person}{Dewei Li}.}
  \bibinfo{year}{2019}\natexlab{}.
\newblock \showarticletitle{Stochastic Assume-Guarantee Contracts for
  Cyber-Physical System Design}.
\newblock \bibinfo{journal}{\emph{ACM Transactions on Embedded Computing
  Systems (TECS)}} \bibinfo{volume}{18}, \bibinfo{number}{1}
  (\bibinfo{year}{2019}), \bibinfo{pages}{2}.
\newblock


\bibitem[\protect\citeauthoryear{Oh, Kang, Shiraishi, and Nuzzo}{Oh
  et~al\mbox{.}}{2019}]%
        {oh2019optimizing}
\bibfield{author}{\bibinfo{person}{Chanwook Oh}, \bibinfo{person}{Eunsuk Kang},
  \bibinfo{person}{Shinichi Shiraishi}, {and} \bibinfo{person}{Pierluigi
  Nuzzo}.} \bibinfo{year}{2019}\natexlab{}.
\newblock \showarticletitle{Optimizing Assume-Guarantee Contracts for
  Cyber-Physical System Design}. In \bibinfo{booktitle}{\emph{2019 Design,
  Automation \& Test in Europe Conference \& Exhibition (DATE)}}. IEEE,
  \bibinfo{pages}{246--251}.
\newblock


\bibitem[\protect\citeauthoryear{Rakovi{\'c}, Kerrigan, Mayne, and
  Kouramas}{Rakovi{\'c} et~al\mbox{.}}{2007}]%
        {rakovic2007optimized}
\bibfield{author}{\bibinfo{person}{SV Rakovi{\'c}}, \bibinfo{person}{Eric~C
  Kerrigan}, \bibinfo{person}{David~Q Mayne}, {and}
  \bibinfo{person}{Konstantinos~I Kouramas}.} \bibinfo{year}{2007}\natexlab{}.
\newblock \showarticletitle{Optimized robust control invariance for linear
  discrete-time systems: Theoretical foundations}.
\newblock \bibinfo{journal}{\emph{Automatica}} \bibinfo{volume}{43},
  \bibinfo{number}{5} (\bibinfo{year}{2007}), \bibinfo{pages}{831--841}.
\newblock


\bibitem[\protect\citeauthoryear{Rockafellar}{Rockafellar}{2015}]%
        {rockafellar2015convex}
\bibfield{author}{\bibinfo{person}{Ralph~Tyrell Rockafellar}.}
  \bibinfo{year}{2015}\natexlab{}.
\newblock \bibinfo{booktitle}{\emph{Convex analysis}}.
\newblock \bibinfo{publisher}{Princeton university press}.
\newblock


\bibitem[\protect\citeauthoryear{Sadraddini, Rudan, and Belta}{Sadraddini
  et~al\mbox{.}}{2017}]%
        {sadraddini2017formal_traffic}
\bibfield{author}{\bibinfo{person}{Sadra Sadraddini},
  \bibinfo{person}{J{\'a}nos Rudan}, {and} \bibinfo{person}{Calin Belta}.}
  \bibinfo{year}{2017}\natexlab{}.
\newblock \showarticletitle{Formal synthesis of distributed optimal traffic
  control policies}. In \bibinfo{booktitle}{\emph{2017 ACM/IEEE 8th
  International Conference on Cyber-Physical Systems (ICCPS)}}. IEEE,
  \bibinfo{pages}{15--24}.
\newblock


\bibitem[\protect\citeauthoryear{Sadraddini and Tedrake}{Sadraddini and
  Tedrake}{2019}]%
        {sadraddini2019linear}
\bibfield{author}{\bibinfo{person}{Sadra Sadraddini} {and}
  \bibinfo{person}{Russ Tedrake}.} \bibinfo{year}{2019}\natexlab{}.
\newblock \showarticletitle{Linear Encodings for Polytope Containment
  Problems}.
\newblock \bibinfo{journal}{\emph{arXiv preprint arXiv:1903.05214}}
  (\bibinfo{year}{2019}).
\newblock


\bibitem[\protect\citeauthoryear{Saoud, Girard, and Fribourg}{Saoud
  et~al\mbox{.}}{2019}]%
        {saoud2019assume}
\bibfield{author}{\bibinfo{person}{Adnane Saoud}, \bibinfo{person}{Antoine
  Girard}, {and} \bibinfo{person}{Laurent Fribourg}.}
  \bibinfo{year}{2019}\natexlab{}.
\newblock \showarticletitle{Assume-guarantee contracts for discrete and
  continuous-time systems}.
\newblock  (\bibinfo{year}{2019}).
\newblock


\bibitem[\protect\citeauthoryear{Yang and Scott}{Yang and Scott}{2018}]%
        {Yang2018}
\bibfield{author}{\bibinfo{person}{Xuejiao Yang} {and}
  \bibinfo{person}{Joseph~K. Scott}.} \bibinfo{year}{2018}\natexlab{}.
\newblock \showarticletitle{{A comparison of zonotope order reduction
  techniques}}.
\newblock \bibinfo{journal}{\emph{Automatica}} (\bibinfo{year}{2018}).
\newblock
\showISSN{00051098}
\urldef\tempurl%
\url{https://doi.org/10.1016/j.automatica.2018.06.006}
\showDOI{\tempurl}


\bibitem[\protect\citeauthoryear{Ziegler}{Ziegler}{2012}]%
        {ziegler2012lectures}
\bibfield{author}{\bibinfo{person}{G{\"u}nter~M Ziegler}.}
  \bibinfo{year}{2012}\natexlab{}.
\newblock \bibinfo{booktitle}{\emph{Lectures on polytopes}}.
  Vol.~\bibinfo{volume}{152}.
\newblock \bibinfo{publisher}{Springer Science \& Business Media}.
\newblock


\end{thebibliography}

\end{document}